\newtheorem{prop}{Proposition}
\date{\today}
\journal{}
\begin{document}
\begin{frontmatter}

\title{The effects of nutrient chemotaxis on bacterial aggregation patterns with non-linear degenerate cross diffusion}

\author[posgrado]{J. Francisco Leyva}
\ead{jfleyva.84@gmail.com}

\author[ciencias]{Carlos M\'{a}laga}
\ead{cmi@ciencias.unam.mx}

\author[iimas]{Ram\'on G. Plaza\corref{cor1}} 
\ead{plaza@mym.iimas.unam.mx}

\cortext[cor1]{Corresponding author. Tel.: +52 (55) 5622-3567. Fax: +52 (55) 5622-3564.}

\address[posgrado]{Posgrado en Ciencias Matem\'{a}ticas, Facultad de Ciencias, Universidad Nacional Aut\'{o}noma de M\'{e}xico, Circuito Exterior s/n, Ciudad Universitaria, C.P. 04510 M\'{e}xico D.F. (Mexico)}

\address[ciencias]{Departamento de F\'{i}sica, Facultad de Ciencias, Universidad Nacional Aut\'{o}noma de M\'{e}xico, Circuito Exterior s/n, Ciudad Universitaria, C.P. 04510 M\'{e}xico D.F. (Mexico)}

\address[iimas]{Departamento de Matem\'{a}ticas y Mec\'{a}nica, Instituto de Investigaciones en Matem\'aticas Aplicadas y en Sistemas, Universidad Nacional Aut\'{o}noma de M\'{e}xico, Apdo. Postal 20-726, C.P. 01000 M\'{e}xico D.F. (Mexico)}

\begin{abstract}
This paper introduces a reaction-diffusion-chemotaxis model for bacterial aggregation patterns on the surface of thin agar plates. It is based on the non-linear degenerate cross diffusion model proposed by Kawasaki \textit{et al.} \cite{KMMUS} and it includes a suitable nutrient chemotactic term compatible with such type of diffusion. High resolution numerical simulations using Graphic Processing Units (GPUs) of the new model are presented, showing that the chemotactic term enhances the velocity of propagation of the colony envelope for dense-branching morphologies. In addition, the chemotaxis seems to stabilize the formation of branches in the soft-agar, low-nutrient regime. An asymptotic estimation predicts the growth velocity of the colony envelope as a function of both the nutrient concentration and the chemotactic sensitivity. For fixed nutrient concentrations, the growth velocity is an increasing function of the chemotactic sensitivity.
\end{abstract}
\begin{keyword}
Nutrient chemotaxis \sep Bacillus subtilis \sep nonlinear degenerate diffusion\sep front propagation.
\end{keyword}

\end{frontmatter}


\section{Introduction}

The spontaneous emergence of complex patterns is ubiquitous in nature. When faced against hostile environmental conditions, for example, bacterial colonies may exhibit very diverse morphological aggregation patterns. Such conditions can be reproduced during \textit{in vitro} experiments 
on the surface of thin agar plates by using a very low level of nutrients, a hard surface (high concentration of agar), or both. Some species (like the bacterium \textit{Bacillus subtilis}) are known to exhibit a great variety of branching patterns depending on the nutrient concentration, softness of the medium, and temperature. For example, when inoculated on a nutrient-poor solid agar, the patterns show fractal morphogenesis similar to diffusion-limited aggregation (DLA) (see, e.g., Matsuyama and Matsushita \cite{MatMat1}, and Ben-Jacob \textit{et al.} \cite{BSTCCV2}). When inoculated on semi-solid agar (softer medium) the bacterial colonies exhibit a dense-branching morphology (DBM) with a smooth colony envelope propagating outward (cf. Ohgiwari \textit{et al.} \cite{OMM}). Finally, when both nutrient and agar's softness further increase, the bacteria form simple circular patterns growing homogeneously (cf. Wakita \textit{et al.} \cite{WKNMM}). A detailed account and comparison of the different experimental results can be found in Kawasaki \textit{et al.} \cite{KMMUS} (see also Golding \textit{et al.} \cite{GKCB} and the references therein). These experimental observations are usually summarized in a morphological diagram like the one shown in Figure \ref{morphdiagram}. The horizontal axis measures the medium softness expressed as $1/C_a$, where $C_a$ is the agar concentration, and the vertical axis is the concentration of the nutrient $C_n$.
\begin{figure}[t]
\begin{center}
\includegraphics[scale=0.55]{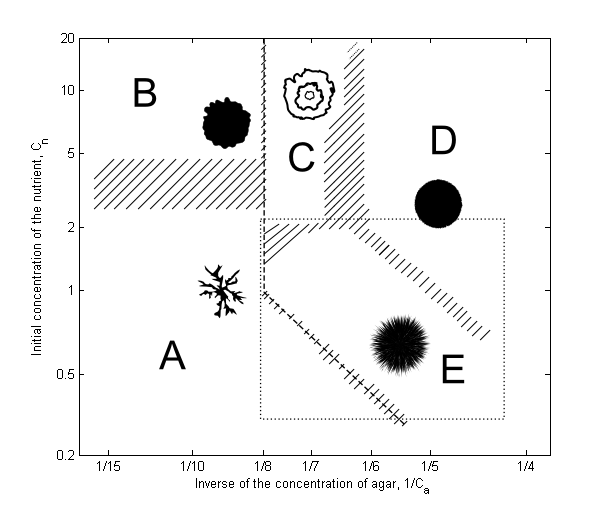}
\end{center}
\caption{Depiction of the morphological diagram of the aggregation patterns exhibited by colonies of \textit{B. subtilis} as a function of the concentration nutrient $C_n$ (vertical axis) and the solidity of the agar medium expressed as $1/C_a$ (horizontal axis). Both concentrations are measured in grams per liter [g/l]. This diagram was redrawn from the one in \cite{KMMUS}.}
\label{morphdiagram}
\end{figure}
When the medium is very hard, the patterns are DLA-like (region A) at low nutrient concentration, and round and compact with fractal boundary for high values of $C_n$ (region B). DBM-like patterns arise in regions where the nutrient is poor and the agar softness is intermediate (semi-solid), with an envelope smoothly rounded and moving outward (region E). In region D, where the agar is softer and the nutrient is more abundant the dense branches fuse together forming homogeneous colonies with smooth boundaries. Although susceptible of further improvements (see, for example, Wu, Zou and Jin \cite{WZJ}), these diagrams provide a good macroscopic description of the colony patterns in terms of simple physical incubation conditions.

In order to explain the experiments, many mathematical models have been proposed. One method of modeling is the continuous deterministic reaction-diffusion approach in which the bacterial density and the nutrient concentration are described with continuous time evolution systems of partial differential equations (cf. Murray \cite{MurII3ed}). Motivated by the experimental observations of Ohgiwara \textit{et al.} \cite{OMM} for DBM-like transition patterns between regions E and D in the morphology diagram (see section \ref{seckawamodel} below), Kawasaki \textit{et al.} \cite{KMMUS} proposed in the mid-90's a reaction-diffusion model with non-linear density-dependent degenerate
(cross) diffusion coefficient, which captures many of the pattern features found experimentally. Thanks to its capability of reproducing the complex dense morphology of patterns as well as its rich mathematical structure, this model has established itself as a well-known non-linear diffusion system in the literature (see, e.g., \cite{MurII3ed,Sh3,SMGA1} and the references therein).

There are, however, other phenomena which must be taken into consideration. Bacterial chemotaxis (that is, the process by which bacteria migrate toward higher concentrations of certain chemical fields) has attracted significant interest due to its critical role in pattern formation (see \cite{KeSe1,KeSe2,Alt1,Ptl1} and \cite{HiPa1} for a review). Ben-Jacob and co-workers (cf. \cite{CCB,GKCB,BeJ2}) have suggested that the chemotactic response of the bacteria is essential to understand some of the fundamental features of the observed patterns. In particular, Golding \textit{et al.} \cite{GKCB} discussed the experimental evidence and provided theoretical arguments favouring the introduction of chemotaxis towards high concentration of nutrients into many of the the reaction-diffusion models existing in the literature. Although the authors did not propose a new chemotactic term for the particular model of Kawasaki and collaborators, they certainly discussed what the relation between a generic diffusion term and the appropriate chemotaxis should be (see section \ref{secrolechemo} below).

Following these ideas, this paper incorporates a suitable nutrient chemotactic term into the original model by Kawasaki \textit{et al.}, and explores the effects that such chemical signaling brings upon the aggregation patterns. For that purpose, we have conducted high resolution numerical simulations of the new system of equations  using Graphic Processing Units (GPUs) to perform  parallel computations. Moreover, we were able to reproduce the results without chemotaxis and to compare them to the new emerging patterns when the chemical signaling is switched on. One of the main features of the new chemotactic term is an increase of the front propagation velocity for the colony envelope. Such observation is justified using detailed front asymptotic calculations of the speed and the numerical results. We find that, except for a regime of soft agar and low nutrient, the chemotaxis does not modify the pattern colony structure significantly, providing evidence of the conjecture of Cohen \textit{et al.} \cite{CCB}.

The remainder of this article is organized as follows. Section \ref{secmathmodel} gives account of the original non-linear reaction-diffusion (non-chemotactic) model and introduces the chemotactic term into the equations. The resulting system is further  normalized in order to reduce the number of parameters. Section \ref{secnumerics} contains the results of our numerical simulations, as well as a comparison of the observed patterns to those without chemotaxis. In order to explain the effects of the chemotactic terms, section \ref{secspeed} contains an approximation of the speed of the envelope front using the ideas of geometric front propagation. Furthermore, we provide a numerical estimation of the speed which approximates well the theoretical result, and compare it to the speed without chemotaxis. The estimations are supported by some theoretical calculations which can be found in \ref{secspeedcomp} at the end of the paper. Finally, in section \ref{secdiscussion}, we make some concluding remarks.

\section{Mathematical modeling}
\label{secmathmodel}

In this section we introduce the mathematical reaction-diffusion-chemotaxis model studied in this paper. We start by providing a detailed explanation of the original model in \cite{KMMUS}. 
%

\subsection{The model of Kawasaki \textit{et al.}}
\label{seckawamodel}

The model proposed by Kawasaki \textit{et al.} \cite{KMMUS} is a reaction-diffusion continuous system of partial differential equations that takes into account the detailed movement of the bacteria. The model has the following general form:
\begin{subequations}\label{ka-model-dim}
\begin{align}
n_t &= D_n \Delta n -f(n,b) \label{eqnkawa}\\
b_t &= \nabla \cdot (D_b \nabla b) + \theta f(n,b) \label{eqbkawa}
\end{align}
\end{subequations}
where $n= n(x,y,t)$ and $b= b(x,y,t)$ represent the concentration of the nutrient and the density of the bacterial cells, respectively. We will denote every point in the domain as $(x,y) \in \Omega \subset \mathbb{R}^2$. Here $\Omega$ is a bounded domain and $t \geq 0$ denotes time. This model assumes that the bacteria and the nutrient diffuse, with diffusion coefficients $D_b$  and $D_n$, respectively. The coefficient $D_n>0$ is constant.

The key feature of this model is the choice of the nonlinear diffusion coefficient $D_b$, which depends on both $b$ and $n$. Its design was motivated by the work of Ohgiwari \textit{et al.} \cite{OMM} who investigated the patterns formed by \textit{B. subtilis} on thin agar plates. They discovered that variations in the environmental conditions as well as in the concentrations of both nutrient and agar medium lead to drastic changes in the morphology of the colony patterns. They found two distinct type of growing processes.  One is performed by immobile cells  when the growth occurs on hard agar plates. In this case, the outermost part of the colonies grows by cell division and cells in the inner part change into a dormant state. The other type of growth takes place for intermediate and low values of $C_a$. This means that the growth is produced by the movement of bacterial cells and, due to the softness of the agar, they can move around dynamically. The outermost front of the colonies is a boundary layer of cells whose movement is dull. But cells inside the layer move around actively and sometimes they break through the layer and immediately become dull. The bacteria inside the colonies are usually inactive. 

In view of such experimental observations, Kawasaki and co-workers conjectured that the bacteria are immotile when either $n$ or $b$ are low and become active as $n$ or $b$ increase. Therefore, they proposed $D_b$ as a non-linear (cross) diffusion function of the form
\begin{equation}\label{CoefCrsDif}
D_b= \sigma nb,
\end{equation}
where $\sigma= \sigma_0 (1+ \Delta)$. The parameter $\sigma_0>0$ is constant and represent the agar concentration (its value increases as the concentration of the agar decreases), whereas the parameter $\Delta$ captures the random diffusion of the cells as a random fluctuation. 

Finally, the term $f(n,b)$ is the consumption rate of the nutrient by the bacteria and $\theta f(n,b)$ is the growth rate of cells, being $\theta>0$ the conversion rate factor. It is assumed that the function $f$ is of Michaelis-Menten type,
\begin{equation}
\label{MicM}
f(n,b)= \frac{\kappa nb}{1+\gamma n},
\end{equation}
where $\kappa >0$ and $\gamma>0$ are considered as constants. In the case when the concentration of the nutrient is low, the function \eqref{MicM} can be approximated by 
\begin{equation}\label{goodkinetics}
f(n,b)= \kappa nb.
\end{equation}
The latter is the form of the kinetic function used by Kawasaki and collaborators in their simulations, and the function that we consider in this paper as well.

\subsection{The role of chemotaxis}
\label{secrolechemo}

Golding \emph{et al.} \cite{GKCB} argue that the growth of bacterial colonies is an auto organization process which involves several mechanisms of chemotactic responses. They suggested that there are at least three different types of chemotactic signals. One of the signals is the result of nutrient (or food) attractive chemotaxis and it is the 
dominant signal in certain regimes of the nutrient level. Another type of signal is a long range repulsive one, which is produced by the starving bacteria in the center of the colonies. The last type of signal is secreted by the bacteria in the front that are immersed in waste products. These bacteria ask for help to metabolize waste products by segregating a chemoattractant. The range of this signal is determined by the diffusion coefficient and the rate of decomposition. It is, in general, of short range.

When refering to bacterial colonies, Golding \emph{et al.} \cite{GKCB} suggested that nutrient chemotaxis is the mechanism responsible of the increment in the growth speed and the control (or even the decrease) of the fractal dimension of the arising patterns. Such process should permit to increase the propagation of the front.

In view of these observations, we introduce a nutrient chemotactic term into system \eqref{ka-model-dim}. For that purpose, we need to define the chemotactic flux $J_c$ which, in general, is written as
\begin{equation}
J_c =\xi (b) \chi (n) \nabla n.
\end{equation}
Here $\chi = \chi (n)$ is the chemotactic sensitivity function and $\xi = \xi (b)$ is the bacterial response to the nutrient gradient. The chemotactic signals are detected by dedicated membrane receptors which perform a binding between ligands (attractors) and specific binding proteins \cite{WadArm1}. Bacteria sense temporal gradients, that is, they compare sequential registers of their occupied chemoreceptors \cite{Eisen1}. This complexity is built into the model equations through a signal-dependent chemotactic sensitivity function. One of the most commonly used forms is the ``receptor law'' proposed by Lapidus and Schiller \cite{LapSch1},
\begin{equation}\label{receptor_law}
\chi(n) = \frac{\chi_0 K_d}{(K_d+n)^2}
\end{equation}
where $\chi_0>0$ is a constant measuring the strength of the chemotaxis and $K_d > 0$ is the receptor-ligand binding dissociation constant, which has nutrient concentration units, and represents the nutrient level needed for half of receptor to be occupied. This receptor sensitivity law has been derived and applied in several models for chemotaxis (cf. \cite{ForLauf1,Seg1,Seg2}). Its design was motivated by the experimental observation of Mesibov, Ordal and Adler \cite{MOA} that for very high concentrations of nutrient the chemotactic response vanishes due to saturation of the receptors. This feature was not previously captured by the standard Keller-Segel chemotactic flux function (see \cite{LapSch1}). The functional form of \eqref{receptor_law} can be understood in terms of a simple model for receptor-signal binding through the law mass action (see \cite{OtSt1,MOA}). It is to be noted that the dissociation constant $K_d$ for the attractor-receptor molecular interaction has a unique value that depends on the bacterial strain, nutrient type and other \textit{in vitro} conditions, and has to be determined experimentally (see Goldman and Ordal \cite{GolOrd1} for values of $K_d$ on various substrates of \textit{B. Subtilis}). Moreover, it can be interpreted as the nutrient level where the chemotactic sensitivity $\chi(n)$ is maximum. Thus, we shall use $K_d$ as the normalization factor for the nutrient density $n$.

The bacterial response function $\xi = \xi(b)$ is positive for attractive chemotaxis and negative for a repulsive one.
In \cite{GKCB} the authors suggest that, if the bacteria move within a liquid and the colony density is low, then the bacterial response should be proportional to the bacteria concentration times the diffusion, that is,
\begin{equation}
\label{proposalchem}
| \xi (b)| \propto b D_b. 
\end{equation}
Notice that, on account of \eqref{proposalchem}, if the diffusion coefficient is constant then the bacterial response function is proportional to $b$, recovering in this case the classical Keller-Segel chemotactic flux \cite{KeSe1,KeSe2}. Therefore, the authors suggest that, for general non-linear diffusion functions $D_b$, the expression of the chemotactic flux $J_c$ must be modified according to \eqref{proposalchem} in order to incorporate the density dependence of the bacterial movement. For example, for the non-linear diffusion considered by Kitsunezaki \cite{Kitsu}, namely $D_b=D_0 b^m$ where $D_0$ is constant and $m > 0$ is an integer, they proposed $\xi(b)= b D_b= D_0 b^{m+1}$ as the appropriate response function.

\subsection{Incorporation of the nutrient chemotactic term}
\label{secourmodel}

We now introduce a suitable chemotactic term into model \eqref{ka-model-dim} expressing the chemical signaling of the nutrient concentration $n$. In view of the form of the non-linear diffusion \eqref{CoefCrsDif} and taking into account \eqref{proposalchem}, we propose the following (attractive) chemotactic flux
\begin{equation}
\label{chemflux}
J_c = \sigma  n b^2 \chi(n) \nabla n,
\end{equation}
where $\chi(n)$ is the receptor law \eqref{receptor_law}. Substracting the divergence of such flux to the bacterial density equation in \eqref{ka-model-dim} we obtain the following reaction-diffusion-chemotaxis model:
\begin{subequations}
\label{ka-chem-model-dim}
\begin{align}
n_t &= D_n \Delta n -\kappa n b \label{nutriente-dim}\\
b_t &= \nabla \cdot (\sigma n b \nabla b) + \theta \kappa nb - \nabla \cdot \left( \sigma n b^2 \frac{ \chi_0 K_d}{(K_d+n)^2} \nabla n\right), \label{bacteria-dim}
\end{align}
\end{subequations}
for $(x,y) \in \Omega$, $t \geq 0$. The system is subject to initial conditions modeling an uniformly distributed initial constant concentration of nutrient and the initial inoculum of the bacteria. They have the form
\begin{equation}\label{init-cond}
n(x,y,0) = \hat n_0, \quad b(x,y,0)= \hat b_0(x,y).
\end{equation}
Here $\hat n_0$ is a positive constant and $\hat b_0$ is a given function. The system (\ref{ka-chem-model-dim}) is further endowed with no-flux boundary conditions, 
\begin{equation}
\label{boundcond}
\nabla b \cdot \nu = 0, \quad \nabla n \cdot \nu = 0, \qquad (x,y) \in \partial \Omega,
\end{equation}
where $\nu \in \mathbb{R}^2$, $|\nu|=1$, is the outer unit normal vector at the boundary of the domain.

System of equations \eqref{ka-chem-model-dim} adds the nutrient chemotactic signaling to model \eqref{ka-model-dim}. The flux \eqref{chemflux} can be interpreted as the cross chemotactic function corresponding to the nonlinear diffusion \eqref{CoefCrsDif}. 

\subsection{Normalization}
\label{secnormalization}
In order to reduce the number of parameters in system \eqref{ka-chem-model-dim} we introduce the following rescaling of the variables:
\[
 \tilde x = \Big( \frac{\theta K_d \kappa}{D_n} \Big)^{1/2} x, \quad \tilde y = \Big( \frac{\theta K_d \kappa}{D_n} \Big)^{1/2} y, \quad \tilde t = (\theta K_d \kappa) t,
\]
\[
 \tilde n = \frac{n}{K_d}, \quad \tilde b = \frac{b}{\theta K_d}, \quad \tilde \sigma = \Big(\frac{\theta K_d^2}{D_n}\Big) \sigma.
\]
Upon substitution into \eqref{ka-chem-model-dim} and dropping the tilded notation for simplicity we obtain the following dimensionless system of equations
\begin{subequations}\label{ka-chem-model}
\begin{align}
n_t &= \Delta n - n b \label{eqnadim}\\
b_t &= \nabla \cdot (\sigma n b \nabla b) + nb - \chi_0 \nabla \cdot \left( \frac{\sigma n b^2}{(1+n)^2} \nabla n\right), \label{eqbadim}
\end{align}
\end{subequations}
with $\sigma = \sigma_0(1+\Delta)$, and with initial conditions
\begin{equation}
\label{inicond}
\begin{aligned}
n(x,y,0) &= \hat n_0 / K_d \equiv n_0, \\
b(x,y,0) &= \hat b_0(x,y,0) / (\theta K_d) \equiv b_0(x,y,0),
\end{aligned}
\end{equation}
for $(x,y) \in \Omega$. To complete the system we impose no-flux boundary conditions on the rescaled variables
\begin{equation}
\label{bconds}
\nabla b \cdot \nu = 0, \quad \nabla n \cdot \nu = 0, \qquad (x,y) \in \partial \Omega.
\end{equation}

The only remaining free parameters are $\sigma_0 > 0$, measuring the hardness of the agar medium (large $\sigma_0$ for low agar concentration); $n_0 > 0$, measuring the relative initial nutrient concentration; and $\chi_0 \geq 0$, which measures the intensity of the chemotaxis. 

In the sequel we compute numerically the solutions to system \eqref{ka-chem-model} under conditions \eqref{inicond} and \eqref{bconds}. We are particularly interested in the interplay between the different parameter values expressing agar concentration, initial nutrient and the chemotactic signal, and how they affect the colony patterns.

\section{Numerical simulations} 
\label{secnumerics} 

We performed several numerical simulations of the system \eqref{ka-chem-model} using an explicit second order Runge-Kutta and finite differences numerical scheme and the NVIDIA CUDA libraries for the C++ programming language. The libraries are written to run on GPUs, and the CUDA programming language is designed under the simple program/multiple data (SPMD) programming model (cf. \cite{Kirk}). More specifically, the GPU architecture is very well-suited to address problems that can be expressed as data-parallel computations, that is, the same program (finite difference calculations) is executed on many data elements (grid points) simultaneously (see \cite{Nvidia}). Hence, we overcome the inconvenience of small time steps required by explicit numerical schemes for solving stiff equations, and were able to accomplish thousands of iterations in a couple of hours. Our numerical experiments were performed on a NVIDIA Tesla$^\copyright$ C2070 graphics card with 448 CUDA cores.

The calculations were performed on a two dimensional square domain of side $L = 680$. The selected mesh for all our computations is a square grid of $N=2048$ points per side (see discussion on section \ref{secgridsize} below). Therefore, the grid width that we used was $\Delta x= \Delta y= L/N \approx 0.3320$. We considered two values for the time step, namely, $\Delta t= 0.0231$ and $ \Delta t= 0.0110$, for which the method showed to be stable.

Following \cite{KMMUS} and for the sake of comparison, the initial distribution for the bacteria was taken as
\begin{equation}\label{inibacexp}
b_0(x,y,0)= b_M e^{-(x^2 +y^2)/ 6.25}
\end{equation}
where $b_M=0.71$ is the maximum density at the center of the domain. Recall that the initial condition for the nutrient is a fixed constant, for which we take the values $n_0 = 0.71$ and $1.07$.

Similarly as the computations in \cite{KMMUS}, and in order to incorporate the fact that bacteria move according to a biased random walk, on every grid point and at each time step the diffusion coefficient for the bacteria $\sigma$ is perturbed from the mean $\sigma_0$ through a random variable $\Delta$ taken from a triangular distribution with support on $[-1,1]$, that is, $\sigma = \sigma_0 (1+\Delta)$. Remember that $\sigma_0$ represents the softness of the agar medium.

Regarding the chemotactic term, the coefficient $\chi_0$ represents the strength of the signal. The simulations were performed using several values of this parameter including the value $\chi_0 = 0$, that is, in the absence of chemotaxis. Thus, we also computed solutions to the original system \eqref{ka-model-dim} as well (in its non-dimensional form) also for the sake of comparison. For a complete summary of the free parameter values used during the numerical experiments, see table \ref{tabresparam}. 

\begin{table} 
\begin{center}
\begin{tabular}{|l|l|l|}
  \hline
Description & Symbol & Values \\ \hline \hline
  Initial condition for the nutrient & $n_0$ & 1.07, 0.71  \\
  Agar softness &$\sigma_0 $ & 1.0, 4.0  \\
  Chemotactic signal strength &$\chi_0$ & 0, 2.5, 5.0, 7.5, 10.0 \\

  \hline

\end{tabular}  
\caption{Free parameter values used during the numerical simulations.}
\label{tabresparam}
\end{center}
\end{table}

\subsection{Results}
\label{secresults}
We now present the numerical simulations of system \eqref{ka-chem-model} subject to initial conditions \eqref{inibacexp} and $n(x,y,0) = n_0$, as well as to boundary conditions \eqref{bconds}. We divide the exposition of our numerical results in terms of the agar's hardness. The latter is expressed by the parameter value $\sigma_0$. In our simulations, we considered the values $\sigma_0 = 1$ and $\sigma_0 = 4$, and the observed patterns correspond to regions E and D in the morphology diagram (see the enclosed region inside a rectangle in figure \ref{morphdiagram}). For $\sigma_0=1$ we took the time step $\Delta t= 0.0231$ and for $\sigma_0=4$ we took it as $ \Delta t= 0.0110$. The computed solutions are represented in figures \ref{fig:set1} through \ref{fig:set4}. Each figure contains two columns associated to two different values of the parameter $\chi_0$, and each row is associated to a different time step. Such configuration allows the reader to compare the colony patterns for different chemotactic intensities. In what follows we highlight that the main effect of the chemotaxis is the increment on the growth speed of the outer envelope of the bacterial colony.

\subsubsection{Semi-solid agar}
We begin with the case of semi-solid agar, for which $\sigma_0 = 1$. This means that the concentration of agar is intermediate and the diffusion of the bacteria is limited. The observed patterns correspond to the DBM-like branching structure in region E of the morphology diagram. The considered values for the initial nutrient level were $n_0= 0.71$ and $1.07$. 

Figure \ref{fig:set1} shows the numerical calculations for $n_0 = 0.71$. On the left column we find the solutions to the system in the absence of chemotaxis, that is, when $\chi_0 = 0$. The pictures for this case are comparable to those obtained by Kawasaki \textit{et al.} (see figure 3(b) in \cite{KMMUS}). On the right column we find the case where the chemotactic sensitivity is set as $\chi_0=5$. The morphology shown in both cases is very similar (if we look carefully we can see that the number of branches is consistent). There is, however, a significant increase in the growth velocity of the outer envelope of the branches when the chemotactic signal is switched on, as the reader may observe.

%

The simulations for $n_0=1.07$ are depicted in figure \ref{fig:set2}. Once again the left column corresponds to the value $\chi_0 = 0$. The patterns are defined by fine radially oriented branches with round envelope. At the center of the colony there is the highest concentration of bacteria. Actually, the peak concentration is the remainder of the initial inoculation, because the bacteria at early times deplete the nutrient in site of the initial condition, but the cells in the inner part of the colony become immotile. Meanwhile, the cells at the periphery of the colony move actively searching for higher concentrations of nutrient; this is what drives the formation of branches. Hence, bacteria at the front of the colony account for the instability at the interface.  This is the main feature of the non-linear cross diffusion model. In the right column of figure \ref{fig:set2} we find the patterns in the presence of a chemotactic signal ($\chi_0=5.0$). We observe that at early stages (see figure \ref{set2b}) the chemotaxis provides a significant outward drift to the cellular movement as discussed in \cite{GKCB}, causing a more ramified pattern with round envelope and a substantial enhancement of growth speed. Comparing figures \ref{set2g} and \ref{set2h} we notice that the chemotactic signal produced a pattern two times bigger (the envelope is moving faster).

\subsubsection{Soft agar}

In our next numerical experiment, we set the value $\sigma_0 = 4$, causing the diffusion of the bacteria to be increased by lowering the concentration of the agar. At sufficient high levels of nutrient the colony patterns are restricted to region D in the morphology diagram, whereas at lower levels of $n_0$ patterns in region E emerge. In order to explore the morphology transitions between regions D and E, we first set $n_0 = 0.71$ as the initial nutrient and consider various values for $\chi_0$, because the chemotactic signal gives rise to a outward drift to the cellular movement. The computations are depicted in figures \ref{fig:set3a} and \ref{fig:set3b}. In this experiment, the comparison is made through the two figures, for which the initial nutrient is fixed and the chemotactic sensitivity takes the values $\chi_0 =0, 2.5, 5.0$ and $7.5$. In figure \ref{fig:set3a}, the left column shows the evolution of the system without chemotaxis. At early times the bacteria evolve as a solid colony (figure \ref{set3aa}), but for the consequent iterations they develop a pattern of radially oriented fjords. On the right column of figure \ref{fig:set3a} we show the results with chemotactic strength $\chi_0=2.5$. As expected, the chemotactic term makes the colony grow faster; compare, for example, figures \ref{set3af} and \ref{set3ag}. Observe that the pictures in the two columns of figure \ref{fig:set3a} (with chemotactic sensitivities $\chi_0 = 0$ and $\chi_0 = 2.5$) exhibit similar morphologies, being the envelope velocity the only apparent difference between them.

For higher values of $\chi_0$, however, the growth speed is not only greatly increased, but there is also a change in the morphologies as well. We may appreciate, for instance, that for $\chi_0 = 7.5$ the colony has almost doubled its size at time step $t=324.0$ (see figures \ref{set3ag} and \ref{set3bh}). In this case the increment on the growth speed is accompanied by a notable morphological change. Presumably, at higher values of $\chi_0$ the propagation speed of the front is higher than the diffusion of the nutrient preventing the fjords from forming. Therefore, the thin branches start fusing (figure \ref{set3bg}) until the pattern shows no more branches and becomes an homogeneous disk (figure \ref{set3bh}). These observations suggest that, in the transition between regions E and D in the morphology diagram, the supressing effect of the chemotactic term on the onset of instability (which causes the formation of branches; cf. \cite{ArLe}) is more evident than in other regions of the diagram. An increase of the chemotactic intensity prevents the formation of further branches and results in an homogeneization of the colony morphology, accompanied with the expected increase in the velocity of the overall colony.

Finally, in figure \ref{fig:set4} we present the results of our numerical simulations for the value of nutrient $n_0 = 1.07$ and chemotaxis sensitivities $\chi_0 = 0$ and $\chi_0 = 7.5$. In this case no significant morphological changes are observed. In the left column of figure \ref{fig:set4} the pattern shown is an homogeneous disk with a peak of bacterial concentration at the center. In the right column (when $\chi_0 = 7.5$), as expected, it is observed that the chemotactic signal increases the growth speed and the pattern is also a slightly rugged solid disk, but with a small depression at the center. The latter happens because the softness of the agar allows the colony to spread out before the nutrient is depleted. In this case (which correspons to region D in the diagram), we notice the increment in the outer colony velocity with no significant change in morphology.

\subsection{A note on the grid size}
\label{secgridsize}

It has been suggested that numerical simulations might depend on the grid size \cite{MiSaMa}. To circumvent this possibility, we performed simulations with the following grid sizes as well: $N =1024, 2048, 4096$ and $8192$. The results for the parameter values $\sigma_0=1$, $n_0=0.71$ and $\chi_0= 5$ can be observed in figure \ref{fig:TMalla} for each grid size. Notice that in the case of $N= 1024$ points per side, the solutions exhibit a cross shaped pattern with a rhomboid envelop (figure \ref{1k}). When the grid size is increased to $N = 2048$ there is indeed a significant change in morphology, showing a denser branching pattern with a round envelope (figure \ref{2k}). This pattern does not change significantly, however, when the size grid is increased to $N = 4096$ and $N = 8192$ (figures \ref{4k} and \ref{8k}). The reader may notice that figures \ref{2k}, \ref{4k} and \ref{8k} show very similar patterns. We conclude that a resolution of $N = 2048$ lies above the threshold needed for a reliable display of the solutions of the system. Even though our computations were performed with parallelization techniques on a GPU, the simulations with $N = 8192$ involved a large computational time (there are more than 64 million grid points). Therefore, we adopted $N = 2048$ as the resolution for all the simulations of system \eqref{ka-chem-model} in this paper.

\begin{figure}[ht!]
\begin{center}
\subfigure[$N = 1024$]{
            \label{1k}
            \includegraphics[scale=0.15]{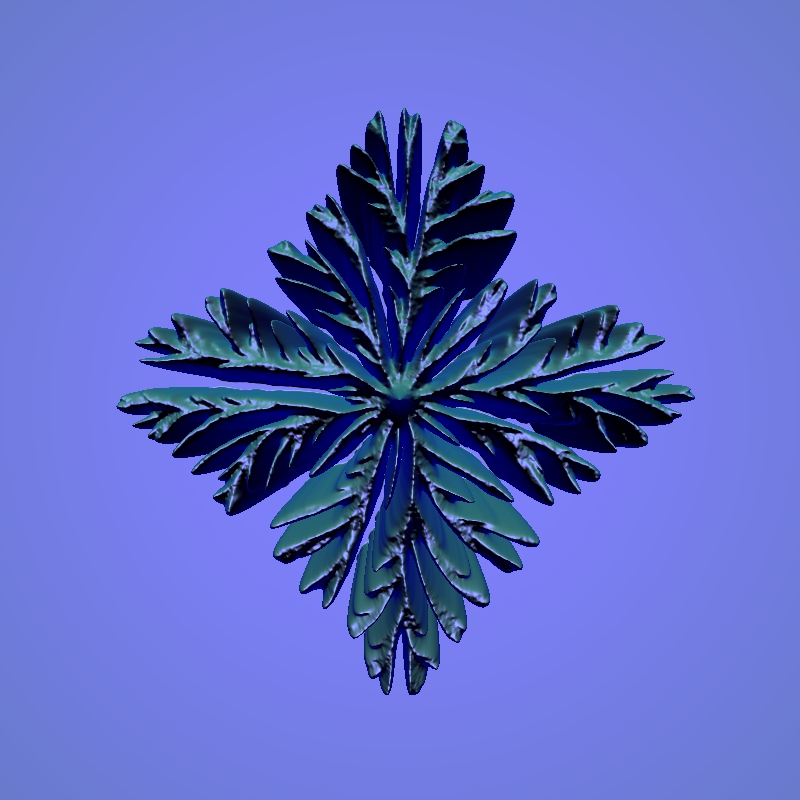}
        }
\subfigure[$N = 2048$]{
            \label{2k}
            \includegraphics[scale=0.15]{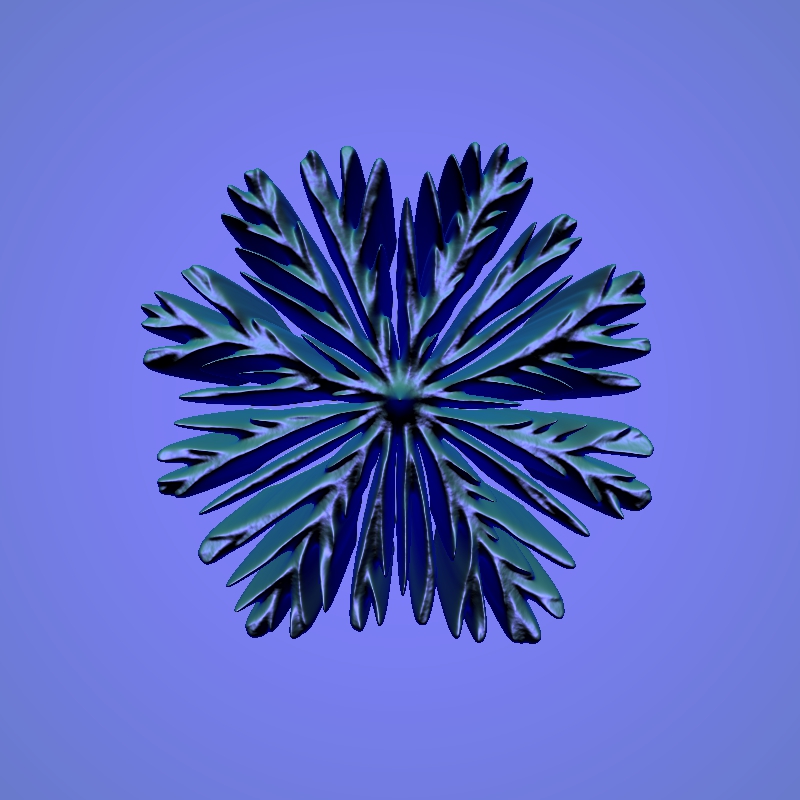}
        }\\
\subfigure[$N = 4096$]{
            \label{4k}
            \includegraphics[scale=0.15]{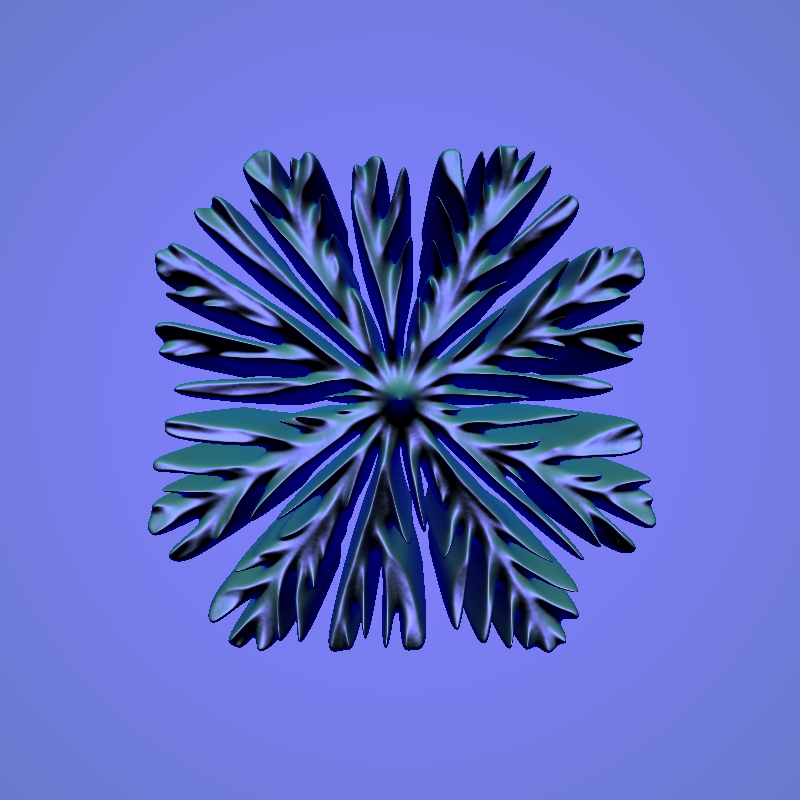}
        }
\subfigure[$N = 8192$]{
            \label{8k}
            \includegraphics[scale=0.15]{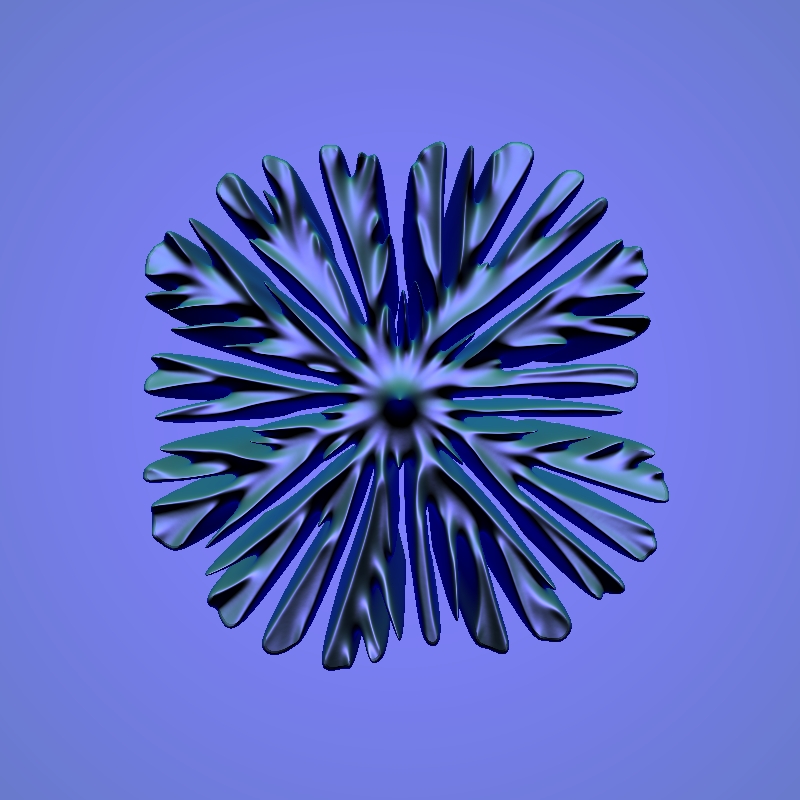}
       }\\
\end{center}
\caption{Simulations of system \eqref{ka-chem-model}, with parameter values $\sigma_0 = 1, n_0 = 0.71$ and $\chi_0=5$, for different values of the grid size $N$.}
\label{fig:TMalla}
\end{figure}

\section{Propagation velocity of the front}
\label{secspeed}

In this section we discuss the effects of the chemotactic term on the envelope propagation speed by means of an asymptotic geometric front approximation. Moreover, we provide numerical estimations of the speed by simulating a one dimensional version of system \eqref{ka-chem-model}. We compare these velocity calculations to those without chemotaxis.

\subsection{Approximation of the speed}
\label{secaproxspeed}

For simplicity, let us consider the chemotactic sensitivity function as a constant, i.e., $\chi =\chi_0$. Therefore, the system of equations takes the form
\begin{subequations}\label{velo-system}
\begin{align}
n_t &= \Delta n - n b \label{velo-nut}\\
b_t &= \nabla \cdot (\sigma n b \nabla b) +  nb - \chi_0 \sigma \nabla \cdot \left( n b^2 \nabla n\right), \label{velo-bact}
\end{align}
\end{subequations}
In order to approximate the velocity of the smooth envelope enclosing the branching patterns, we explore the following reduction proposed by Kawasaki \textit{et al.} \cite{KMMUS}. They made a further approximation in order to obtain a scalar equation for the bacterial concentration $b$.
They observed that in the absence of diffusion the total mass is conserved. In our setting, if we ignore both diffusion and chemotaxis then we can add the equations and integrate them in time to obtain
\begin{equation} \label{mass-conser}
n+b=C
\end{equation}
where $C$ is a constant, which will be taken as $C=n_0$. Thus, $n= n_0 -b$ and we obtain the logistic equation
\[
b_t = n_0 b \left( 1- \frac{b}{n_0}\right). 
\]

It must be observed that in \cite{KMMUS} the term $\sigma_0 n b$ is substituted by $\sigma_0 n_0 b$, and the term $nb$ is substituted by $n_0 b (1-b/n_0)$, in the approximation of the system by a scalar equation for $b$. Instead, we are substituting (\ref{mass-conser}) into (\ref{velo-bact}) to obtain a scalar equation for $b$. This not only represents a more accurate approximation but it is also consistent with the numerical computations which, at first order, show that the conservation-like equation \eqref{mass-conser} is valid away from the front.

Substitution of (\ref{mass-conser}) into (\ref{velo-system}) yields the scalar equation 
\begin{equation}\label{scalar-eq-b}
b_t = \nabla \cdot ( \sigma_0 (n_0 - b) b \nabla b)+ (n_0 - b) b - \sigma_0 \chi_0 \nabla \cdot( (n_0- b) b^2 \nabla (n_0-b)),
\end{equation}
which is an approximated scalar equation for $b$ with a Fisher-type reaction term. Here we have approximated $\sigma_0 (1+\Delta) \approx \sigma_0$, because in this part of the study we are not interested in considering the random motion of bacteria. A rearrangement of the terms in equation \eqref{scalar-eq-b} allows us to write the two divergence terms together. This yields
\begin{equation}\label{scalar-eq-b-re}
b_t = \sigma_0 \nabla \cdot (\tilde{D}(b,\chi_0) \nabla b) + g(b),
\end{equation}
where
\begin{equation}\label{D-tilde}
\tilde{D}(b,\chi_0) = n_0 b\left( 1 -\frac{b}{n_0} \right)(1+\chi_0 b),
\end{equation}
may be interpreted as an effective non-linear diffusion coefficient, which involves both the contributions of diffusion and chemotaxis. The reaction term is of Fisher-KPP type,
%
\begin{equation}\label{reac-term}
g(b)= n_0 b \left( 1 -\frac{b}{n_0} \right).
\end{equation}

We describe the motion of the front in local curvilinear coordinates with components $\zeta(x,y,t)$ and $\tau(x,y,t)$, the normal and tangent unit vectors to the front, respectively. According to custom, $\zeta$ is the normal component pointing inside the colony front, so that $- \zeta_t$ is the outer normal speed. We assume that the dependence of $b$ with respect to the tangent component is negligible and we approximate
\begin{equation}\label{growth-1D}
b(x,y,t)\approx \bar b(\zeta(x,y,t)).
\end{equation}
After substituting $\bar b$ into equation (\ref{scalar-eq-b-re}), and dropping the bar for notation convenience, we get the following ordinary differential equation for $b$
\begin{equation}\label{ode-b}
-c b' = \sigma_0 (\tilde D(b)b')' + b(n_0 -b).
\end{equation}
where
\begin{equation}\label{velocity}
 c = -\zeta_t +\sigma_0 \tilde D(b) \Delta \zeta
\end{equation}
and $-\zeta_t$ is the velocity of the envelope front that we want to estimate.


The approximated front equation \eqref{ode-b} in the normal direction resembles a one-dimensional equation of form \eqref{theonedimeq} (see \ref{secspeedcomp}). By the results of Malaguti and Marcelli \cite{MalMar1} the one-dimensional velocity for each value of $\chi_0 \geq 0$ can be estimated in terms of $\chi_0$, $n_0$, and the functions $\tilde D$ and $g$. We note that we are interested in comparing the velocity between the cases $\chi_0 = 0$ and $\chi_0 >0$. The aforementioned result shows that one-dimensional traveling wave solutions to equations of the form (\ref{theonedimeq}) must travel with velocity $c(\chi_0)$ bounded by $c(\chi_0) \geq c_*(\chi_0)$, where the threshold velocity $c_*$ satisfies the bound:
\begin{equation}
0 < c_*(\chi_0) \leq \bar{c}(n_0, \chi_0):= 2 \sqrt{\sup\limits_{b \in (0,n_0]} \frac{\sigma_0 \tilde{D}(b,\chi_0) g(b)}{b}}
\end{equation}
with $\tilde{D}$ and $g$ given by (\ref{D-tilde}) and (\ref{reac-term}). Upon substitution, we are able to compute the bound
\begin{equation}
\bar{c} (n_0, \chi_0) = 2 n_0 \sqrt{\sigma_0} \sqrt{\max\limits_{b \in [0,n_0]} b \left(1-\frac{b}{n_0}\right)^2 (1+\chi_0 b)}.
\end{equation} 

The function $\tilde \psi(b) = b (1-b/n_0)^2 (1+\chi_0 b)$ is non-negative for $0 \leq b \leq n_0$ and each $\chi_0 \geq 0$; moreover, it has a global maximum $\tilde \psi_{max}= \tilde \psi(b_*)$, where
\begin{equation}
b_* = \frac{1}{2}\left(\frac{n_0}{2}-\frac{3}{4\chi_0}\right) + \frac{1}{2} \sqrt{\left(\frac{n_0}{2}-\frac{3}{4\chi_0}\right)^2 + \frac{n_0}{\chi_0}},
\end{equation}
and with $0 < b_* < n_0$ for all values of $\chi_0 > 0$ and any fixed value of $n_0 > 0$. Thus,
\begin{equation}\label{c-bar-chi0}
\bar{c}(n_0, \chi_0)= 2 n_0 \sqrt{\sigma_0}\sqrt{\psi(b_*)}.
\end{equation}

In the absence of chemotaxis ($\chi_0=0$) the function $\tilde \psi(b)$ reduces to $\psi(b)= b(1-b/n_0)^2$ and has a global maximum $\psi(n_0/3) = 4 n_0/27$. This means that
\begin{equation}\label{c-bar-0}
\bar{c}(n_0,0)= 2 n_0 \sqrt{\sigma_0}\sqrt{\frac{4 n_0}{27}} = \frac{4}{3\sqrt{3}} n_0^{3/2} \sigma^{1/2}.
\end{equation}
Equations \eqref{c-bar-chi0} and \eqref{c-bar-0} are the speed thresholds for the cases with and without chemotaxis, respectively. These equations predict the speed of a sharp front, and will be used when comparing the numerical estimates of the velocity (see section \ref{secnumspeed} below).
In our setting, we are considering that the growth of the colony is in the radial direction (see equation \eqref{growth-1D}), so we only consider the normal velocity $s = -\zeta_t$. Furthermore, we take into account the cases where the colony exhibits an outer growth, this means that the local curvature is positive there
\begin{equation}\label{curvature}
\kappa = -\Delta \zeta >0.
\end{equation}

Returning to equations (\ref{ode-b}) and (\ref{velocity}) and comparing the one dimensional velocity of a front for equation \eqref{theonedimeq} to expression \eqref{velocity}, we obtain
\begin{equation}
s = -\zeta_t \geq c_*(\chi_0) + \sigma_0 \tilde{D}(b) \kappa >0.
\end{equation} 
It follows from equation (\ref{curvature}) and Proposition \ref{propbound}, in \ref{secspeedcomp}, that 
\begin{equation} 
\label{speedinq}
s= -\zeta_t \geq c_*(\chi_0) + \sigma_0 \tilde{D}(b) \kappa \geq c_*(0) + \sigma_0 \tilde{D}(b) \kappa \geq c_*(0). 
\end{equation}
%

Last inequality implies that the normal velocity $s$ is greater than the velocity for the sharp front when there is no chemotaxis ($\chi_0 = 0$), and that it is increased/decreased by a term proportional to the local curvature and the chemotactic strength $\chi_0>0$. In the outer front where diffusion is degenerate, this last curvature term is negligible (as $\tilde D = 0$ in a vicinity of the envelope front) so that we can approximate the speed by the one-dimensional sharp front velocity, $c_*(\chi_0) \lesssim s$, regardless of the curvature sign.

\subsection{Numerical computation of the velocity}
\label{secnumspeed}

As we have stated in (\ref{growth-1D}), the tangential direction is negligible in describing the front movement. Then, we shall estimate the velocity of the envelope front solving numerically a one-dimensional version of system \eqref{velo-system}, that is of the following form,
\begin{subequations}\label{velo-system-1d}
\begin{align}
\frac{\partial n}{\partial t } &= \left( \frac{\partial n}{\partial x } \right)_x  - n b \label{velo-nut-1d}\\
\frac{\partial b}{\partial t } &= \left(\sigma_0 n b \frac{\partial b}{\partial x } - \chi_0 \sigma_0 n b^2 \frac{\partial n}{\partial x } \right)_x +  nb  \label{velo-bact-1d}.
\end{align}
\end{subequations}
The simulations of system \eqref{velo-system-1d} were performed on the spacial domain $[0,20]$, and the initial conditions were take as
\begin{equation}
n(x,0) = n_0, \; b(x,0)= b_M e^{-{x}^2 / 6.25},
\end{equation}
where $b_M= 0.71$. Again, as in the two-dimensional simulations, $n_0$ measures the initial level of nutrient, $\chi_0$ measures the chemotaxis intensity and $\sigma_0$ measures the agar hardness. We made several simulations of system \eqref{velo-system-1d} for various values of $n_0 = \lbrace 0.35, 0.71, 1.07, 2, 3, 4 \rbrace$, $\chi_0 = \lbrace 0,1,2.5,5,10 \rbrace$. In all simulations we fixed $\sigma_0 =1.$
Figure \ref{fig:Solb-1d-1} depicts the numerical solutions of the bacterial density of $b$ at different time steps and different values of $n_0$ and $\chi_0$. It should be noted that solutions of $b$ initially do not exhibit a traveling wave profile, nevertheless, eventually the solution resembles a traveling wave. Therefore, the traveling wave estimations were computed at later times by selecting a point in the vicinity of 0.5. These are the points shown in figure \ref{fig:Solb-1d-1}.
In figure \ref{fig:Vel-1} we present a comparison of the propagation velocity of the front between the numerical estimations and the velocity threshold given by equations (\ref{c-bar-chi0}) for $\chi_0>0$ or (\ref{c-bar-0}) for $\chi_0=0$, as functions of the initial nutrient $n_0$. 
For the case without chemotaxis, we found sharper results (see figure \ref{fig:Vel-xi=0}) than those in the original work by Kawasaki \textit{et al.}, because we are considering a more accurate velocity threshold, as was explained in section \ref{secaproxspeed}.
As expected from equation \eqref{speedinq}, when a chemotactic signal is present the velocity of the front is greatly increased. We can see that the numerical velocity fits the velocity threshold.
Therefore, figure \ref{fig:Vel-1} highlights two facts: first, numerical estimations of the propagation speed of the one-dimensional system are in good accordance with the theoretical speed thresholds that are defined for the scalar equation for $b$ (see equation \eqref{scalar-eq-b-re}); and second, the conservation-like equation \eqref{mass-conser} is a good approximation of the solutions of system \eqref{velo-system}.

\begin{figure}[ht!]
\begin{center}
\subfigure[$n_0=0.71$, $\chi_0=10$]{
            \label{fig:Solb-1d-1a}
            \includegraphics[scale=0.38]{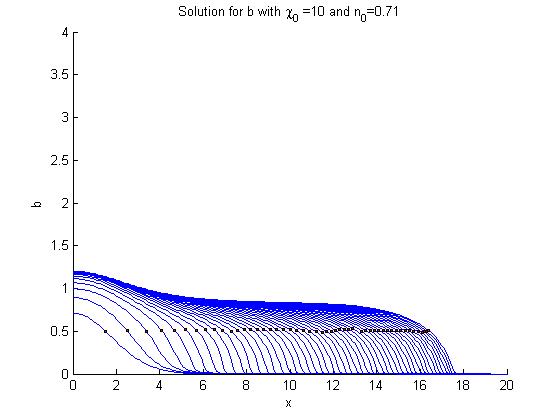}
        }
\subfigure[$n_0=2$, $\chi_0=2.5$]{
            \label{fig:Solb-1d-1b}
            \includegraphics[scale=0.38]{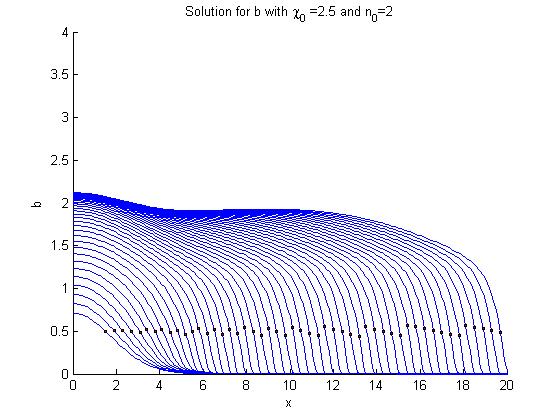}
        }\\
\subfigure[ $n_0=3$, $\chi_0=1$]{
            \label{fig:Solb-1d-2a}
            \includegraphics[scale=0.38]{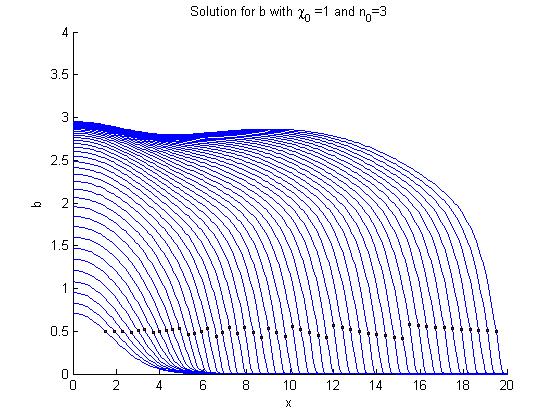}
        }
\subfigure[$n_0=4$, $\chi_0=1$]{
            \label{fig:Solb-1d-2b}
            \includegraphics[scale=0.38]{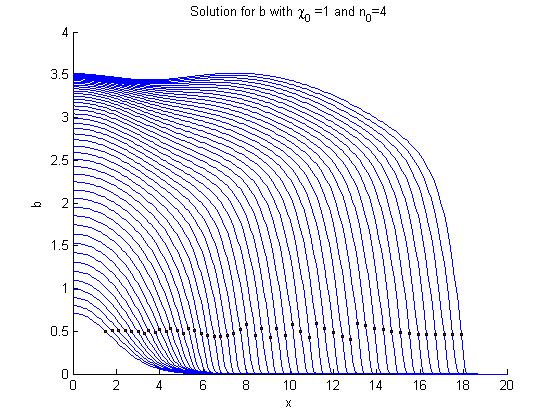}
        }\\
\end{center}
\caption{Numerical solution of the bacterial density $b$ from the system (\ref{velo-system-1d}) 
at different times and different values of $n_0$ and $\chi_0$.}
\label{fig:Solb-1d-1}
\end{figure}

\begin{figure}[ht!]
\begin{center}
\subfigure[$\chi_0 = 0$]{
            \label{fig:Vel-xi=0}
            \includegraphics[scale=0.375]{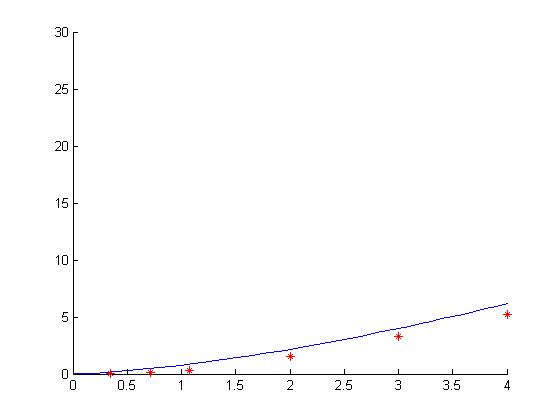}
        }
\subfigure[$\chi_0=1.0$]{
            \label{fig:Vel-xi=1}
            \includegraphics[scale=0.375]{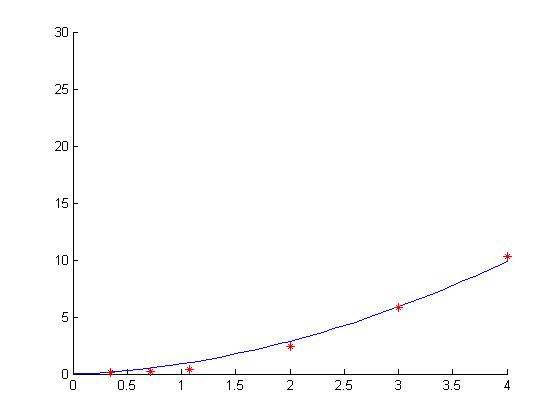}
        }\\
\subfigure[$\chi_0 =2.5$]{
            \label{fig:Vel-xi=2.5}
            \includegraphics[scale=0.375]{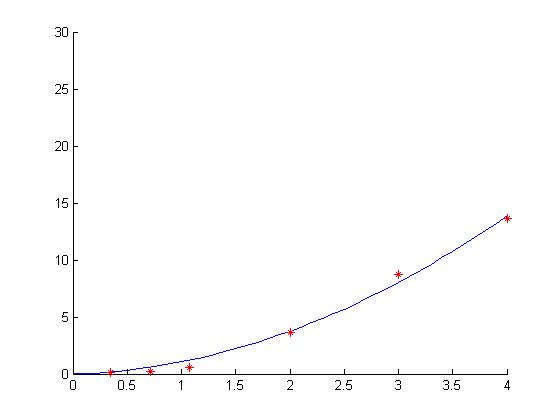}
        }
\subfigure[$\chi_0=5.0$]{
            \label{fig:Vel-xi=5}
            \includegraphics[scale=0.375]{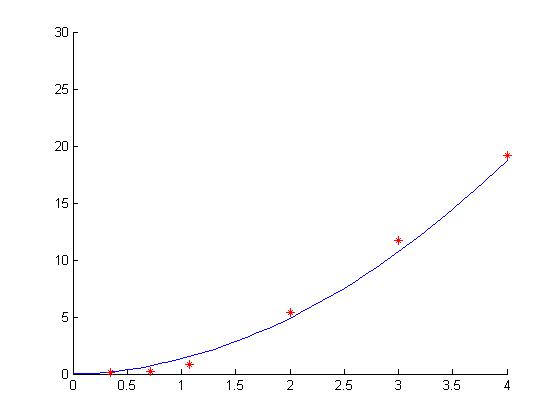}
        }\\
\subfigure[$\chi_0=10.0$]{
            \label{fig:Vel-xi=10}
            \includegraphics[scale=0.375]{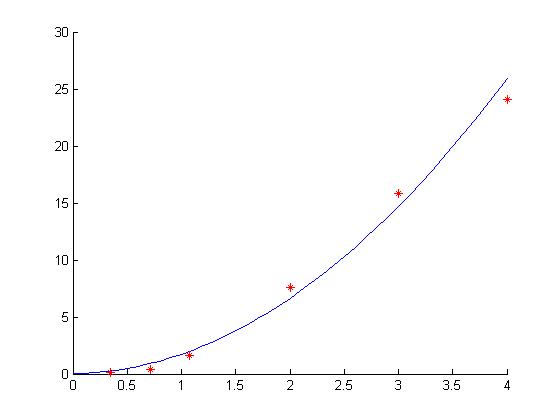}
        }
\end{center}

\caption{Comparison of the propagation speeds of the front for the one-dimensional system (\ref{velo-system-1d}) as a function of the initial nutrient concentration for different values of the chemotactic sensitivity $\chi_0$. The asterisks depict the numerical estimations of the velocity. The solid line is the velocity threshold given by equation \eqref{c-bar-0} in the absence of chemotaxis (case \ref{fig:Vel-xi=0}), and by equation \eqref{c-bar-chi0} when chemotaxis is switched on (cases \ref{fig:Vel-xi=1}, \ref{fig:Vel-xi=2.5}, \ref{fig:Vel-xi=5} and \ref{fig:Vel-xi=10}), both as functions of the initial nutrient concentration $n_0$. Here $\sigma=1$. }
\label{fig:Vel-1}
\end{figure}

\begin{figure}[ht!]
\begin{center}
\subfigure[$\chi_0 = 0$, $t = 694.5$]{
            \label{set1a}
            \includegraphics[scale=0.125]{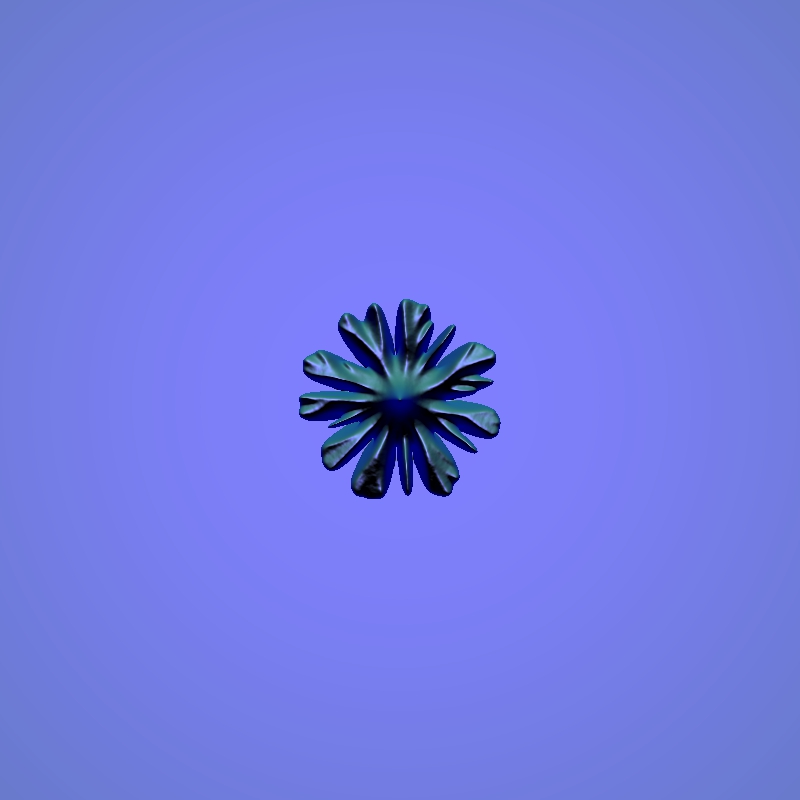}
        }
\subfigure[$\chi_0= 5$, $t = 694.5$]{
            \label{set1b}
            \includegraphics[scale=0.125]{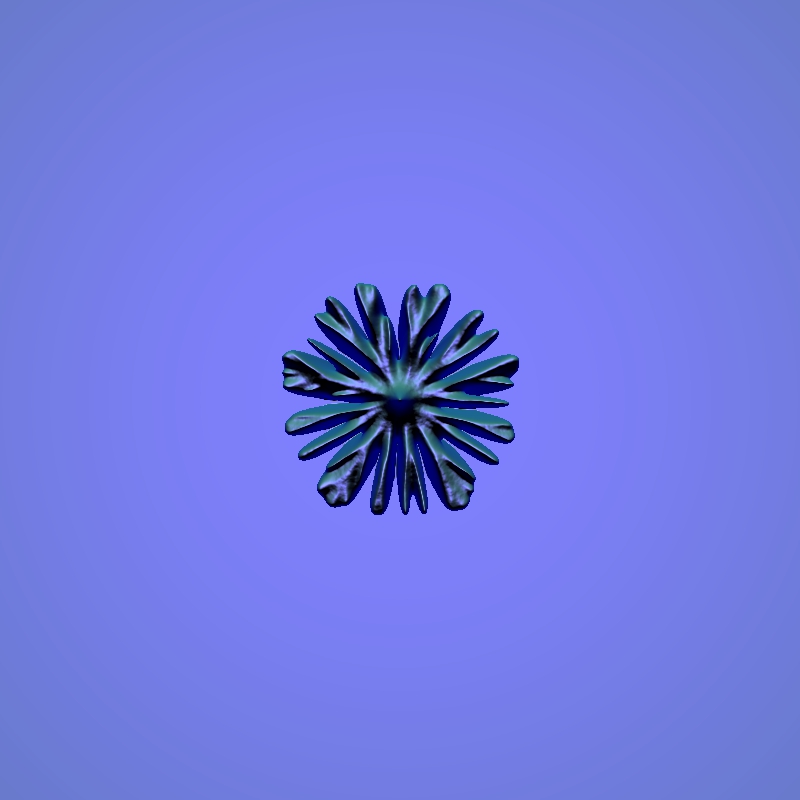}
        }\\
\subfigure[$\chi_0 = 0$, $t = 1389.0$]{
            \label{set1c}
            \includegraphics[scale=0.125]{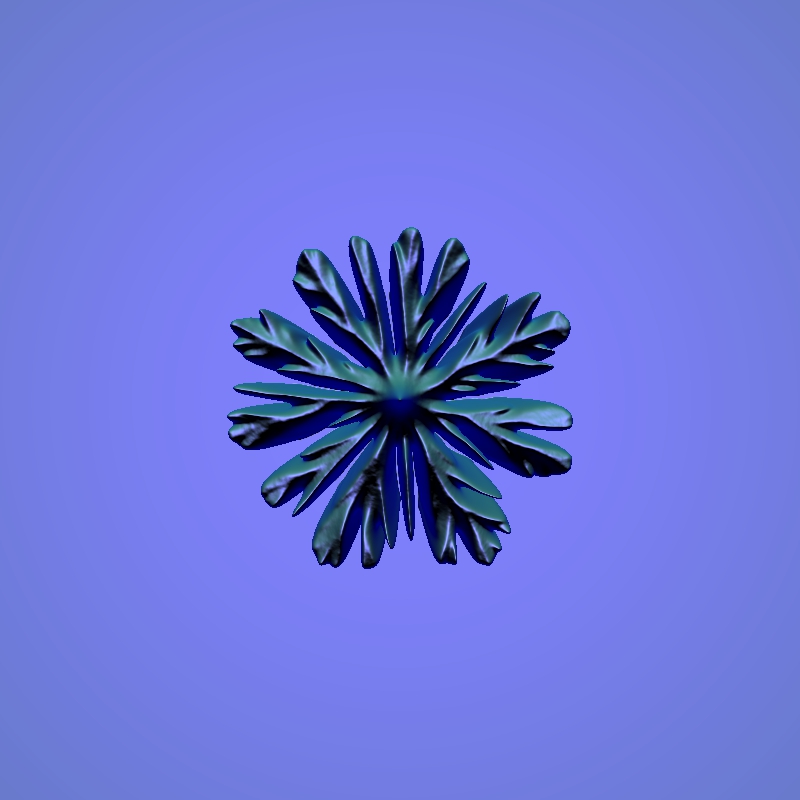}
        }
\subfigure[$\chi_0= 5$, $t = 1389.0$]{
            \label{set1d}
            \includegraphics[scale=0.125]{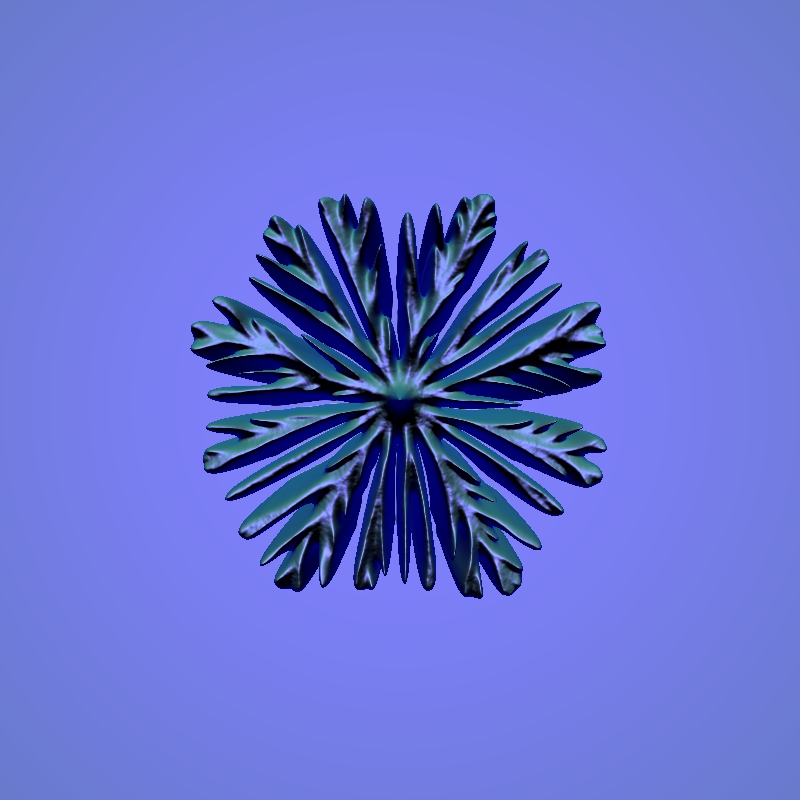}
        }\\
\subfigure[$\chi_0 = 0$, $t = 2083.5$]{
            \label{set1e}
            \includegraphics[scale=0.125]{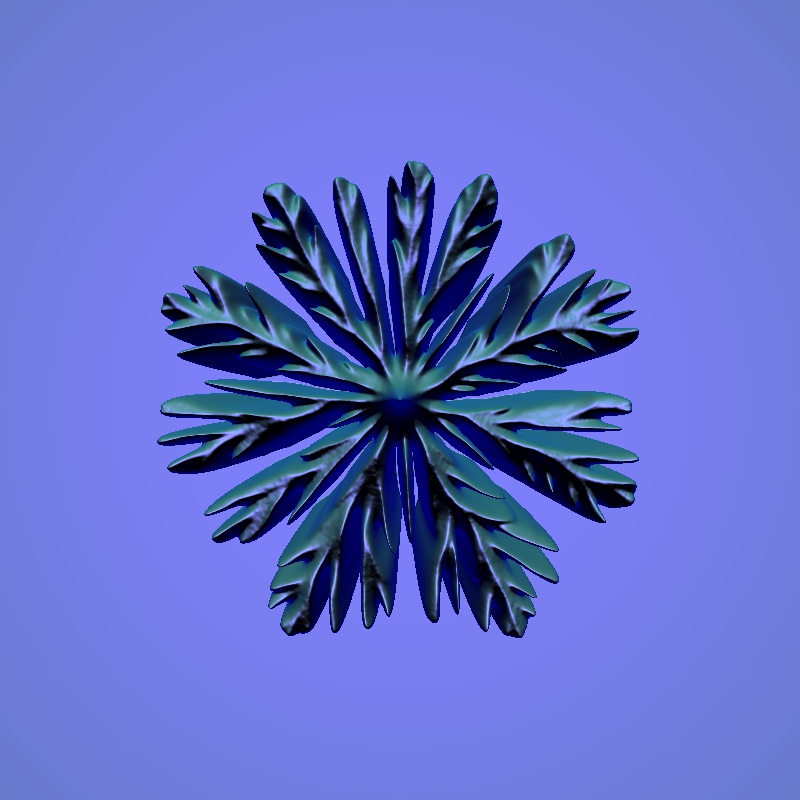}
        }
\subfigure[$\chi_0= 5$, $t = 2083.5$]{
            \label{set1f}
            \includegraphics[scale=0.125]{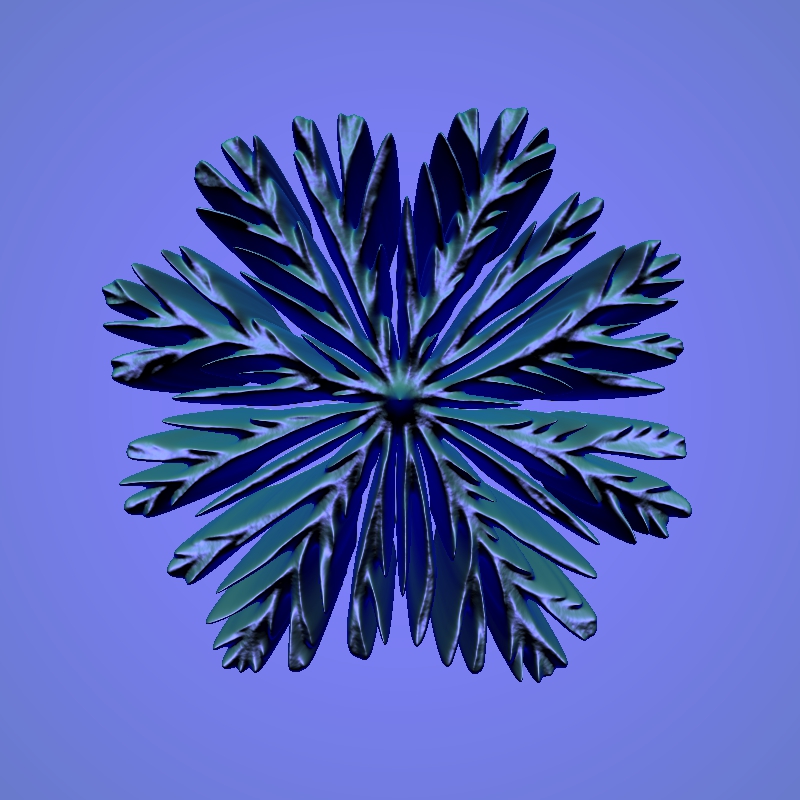}
        }\\
\subfigure[$\chi_0 = 0$, $t = 2824.0$]{
            \label{set1g}
            \includegraphics[scale=0.125]{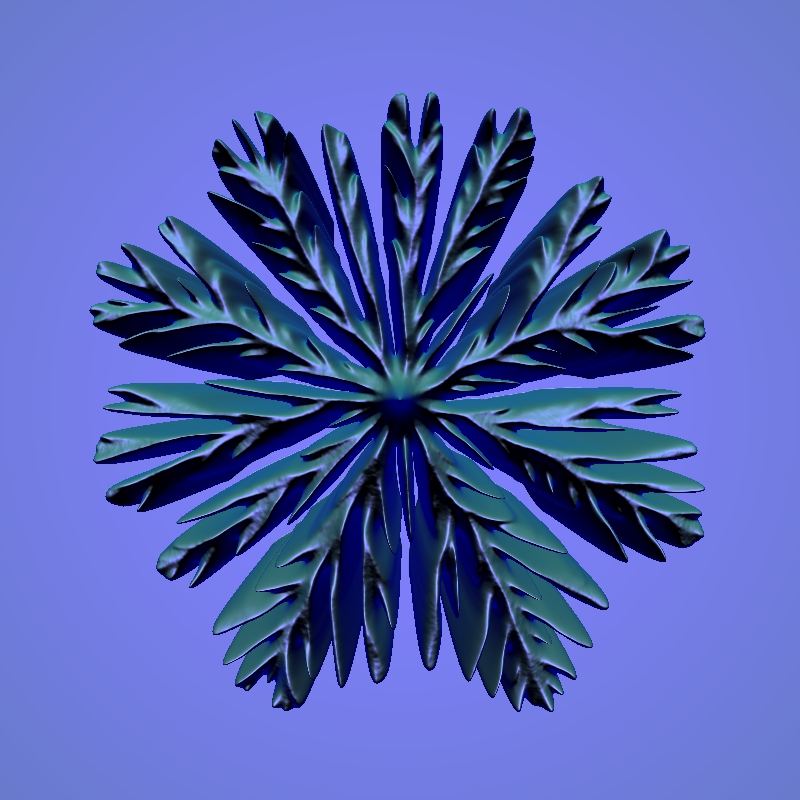}
        }
\subfigure[$\chi_0= 5$, $t = 2824.0$]{
            \label{set1h}
            \includegraphics[scale=0.125]{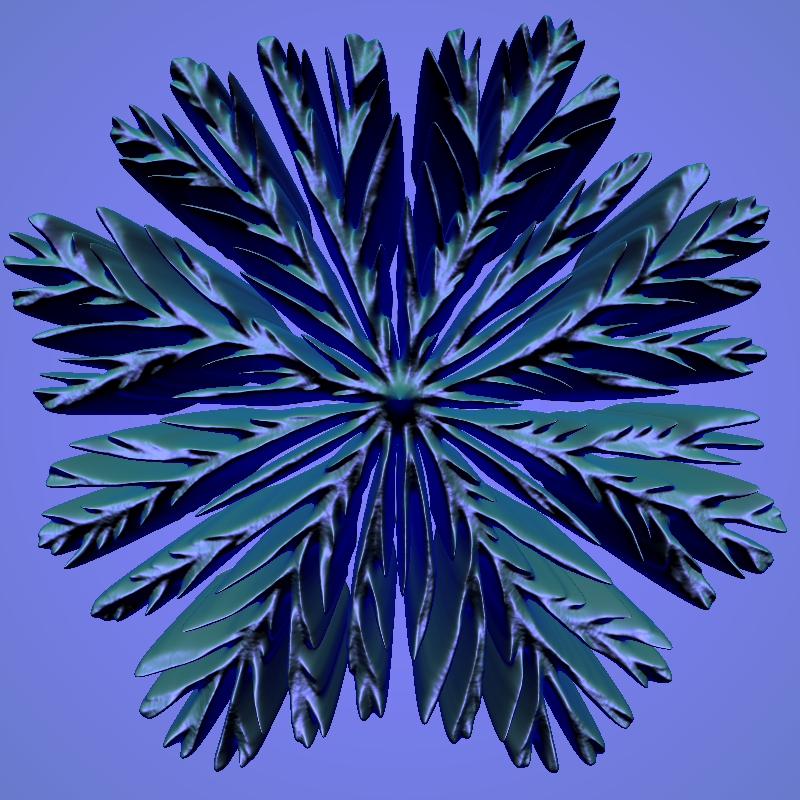}
        }\\
\end{center}
\caption{Colony growth as a result of simulations of system \eqref{ka-chem-model}, taking $\sigma_0 = 1.0, n_0 = 0.71$, with chemotaxis sensitivity $\chi_0 = 0$ (no chemotaxis - left), and $\chi_0 = 5.0$ (right), for different values of $t$.
}
\label{fig:set1}
\end{figure}

\begin{figure}[ht!]
\begin{center}
\subfigure[$\chi_0 = 0$, $t = 115.7$]{
            \label{set2a}
            \includegraphics[scale=0.125]{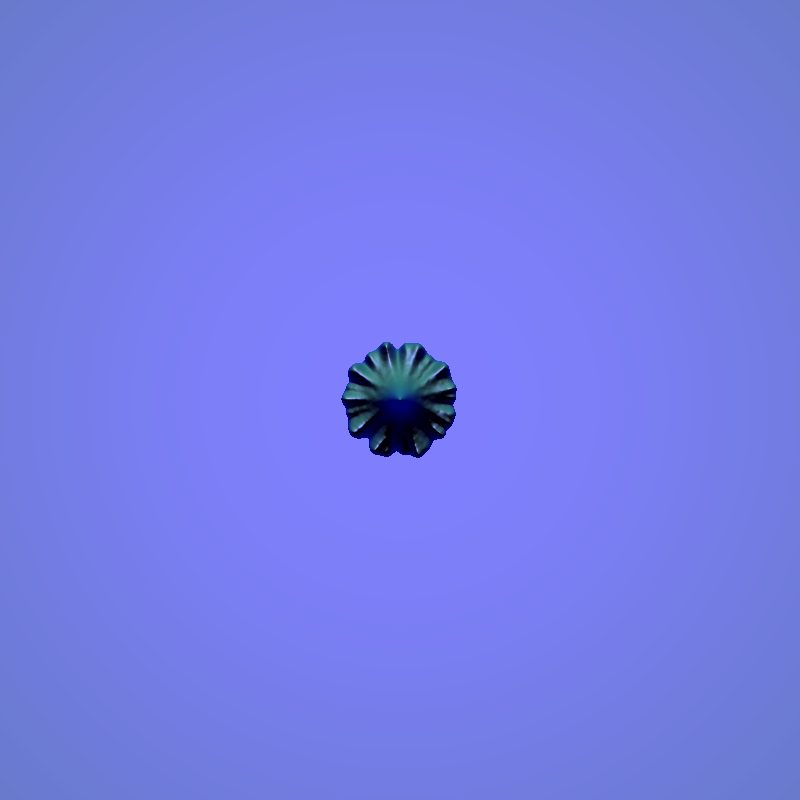}
        }
\subfigure[$\chi_0=5$, $t = 115.7$]{
            \label{set2b}
            \includegraphics[scale=0.125]{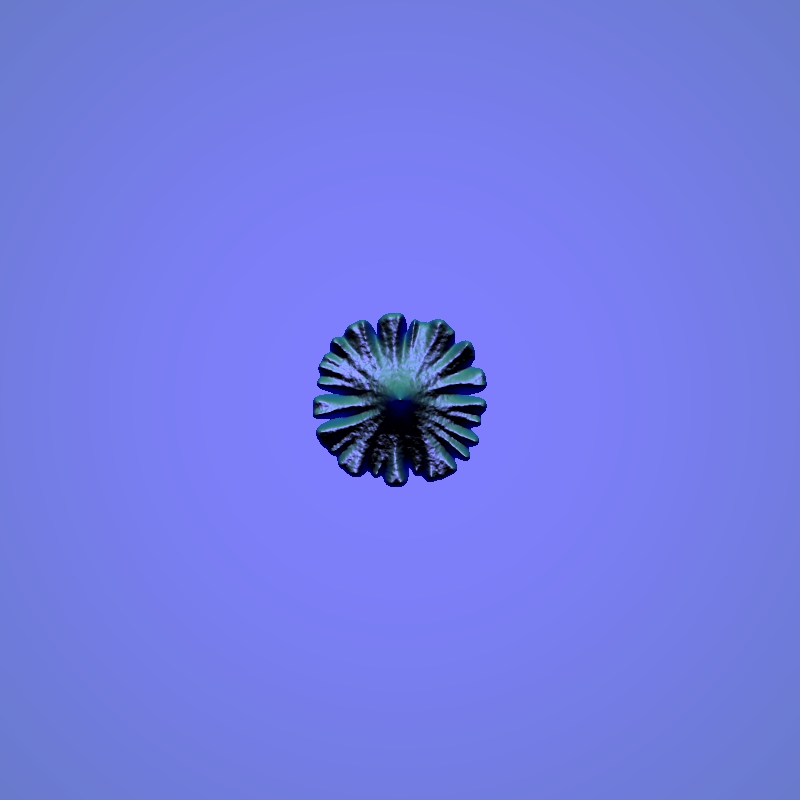}
        }\\
\subfigure[$\chi_0 = 0$, $t = 231.5$]{
            \label{set2c}
            \includegraphics[scale=0.125]{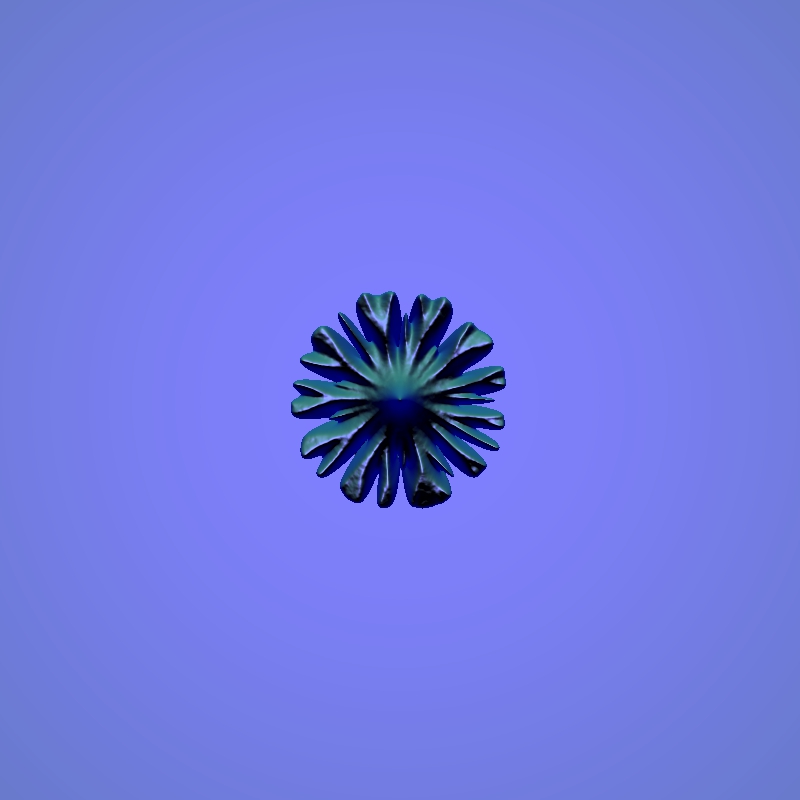}
        }
\subfigure[$\chi_0=5$, $t = 231.5$]{
            \label{set2d}
            \includegraphics[scale=0.125]{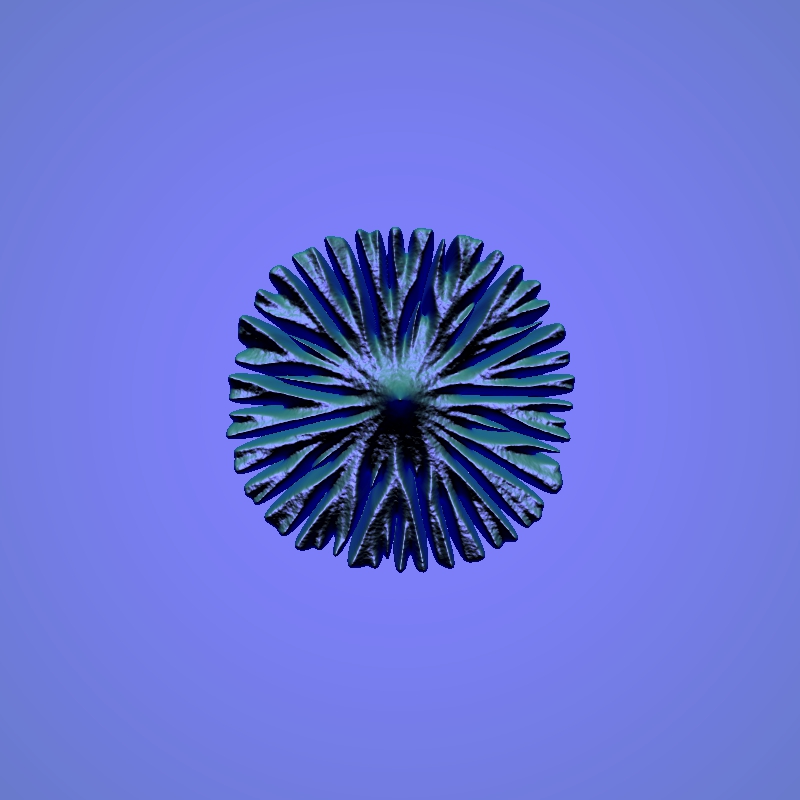}
        }\\
\subfigure[$\chi_0 = 0$, $t = 370.4$]{
            \label{set2e}
            \includegraphics[scale=0.125]{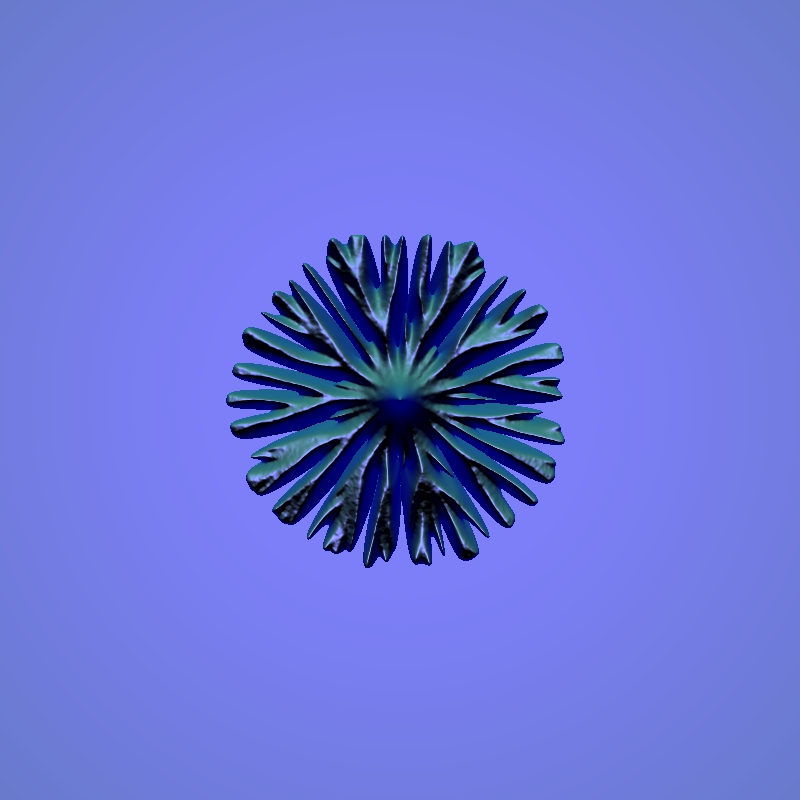}
        }
\subfigure[$\chi_0=5$, $t = 370.4$]{
            \label{set2f}
            \includegraphics[scale=0.125]{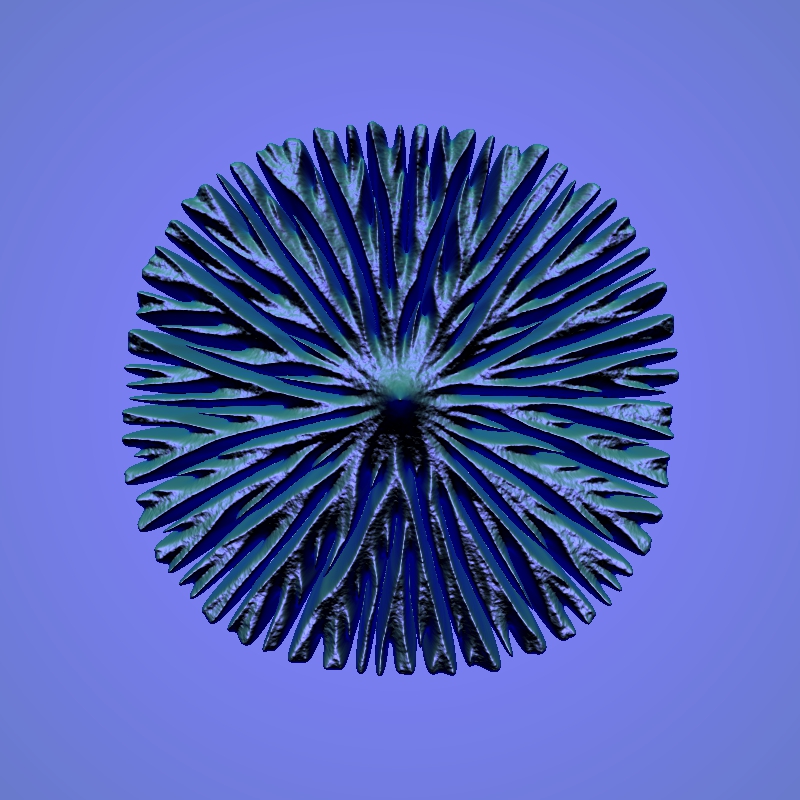}
        }\\
\subfigure[$\chi_0 = 0$, $t = 509.3$]{
            \label{set2g}
            \includegraphics[scale=0.125]{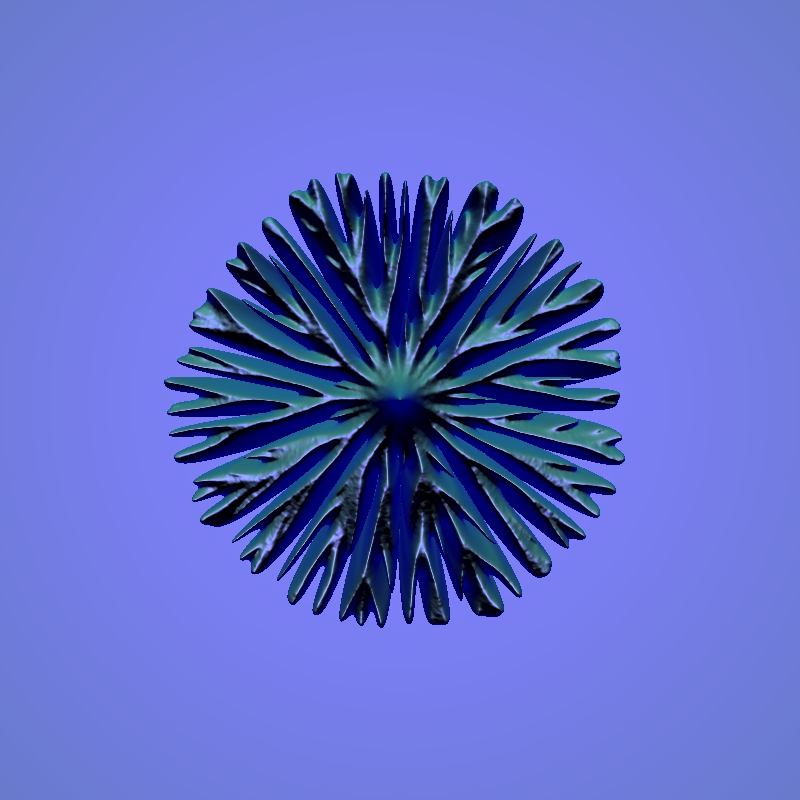}
        }
\subfigure[$\chi_0=5$, $t = 509.3$]{
            \label{set2h}
            \includegraphics[scale=0.125]{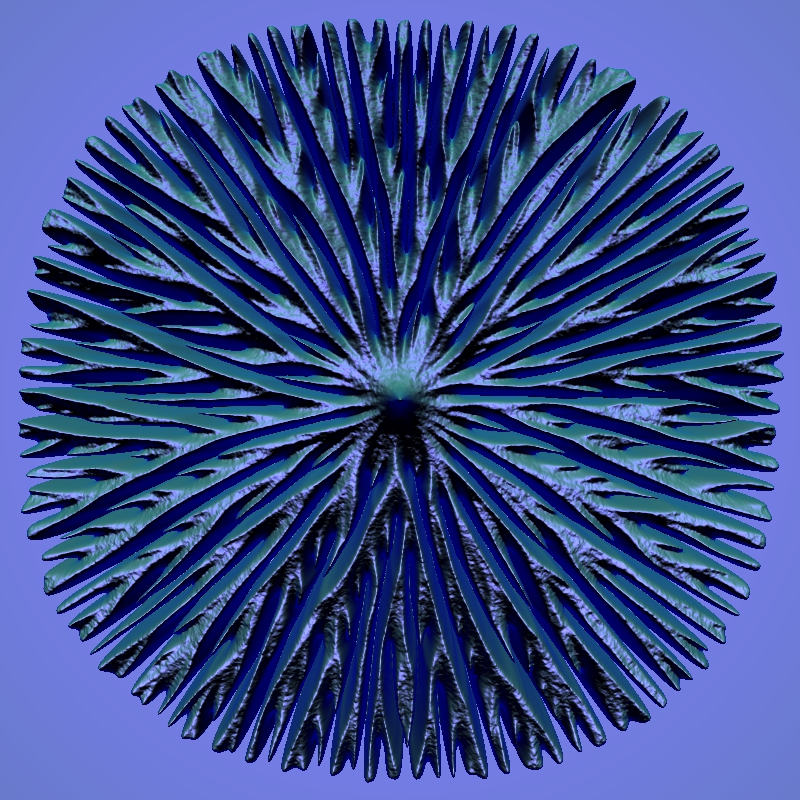}
        }\\
\end{center}
\caption{Colony growth as a result of simulations of system \eqref{ka-chem-model}, taking $\sigma_0 = 1.0, n_0 = 1.07$, with chemotaxis sensitivity $\chi_0 = 0$ (no chemotaxis - left), and $\chi_0 = 5.0$ (right), for different values of $t$.}
\label{fig:set2}
\end{figure}

\begin{figure}[ht!]
\begin{center}
\subfigure[$\chi_0 = 0$, $t = 92.6$]{
            \label{set3aa}
            \includegraphics[scale=0.125]{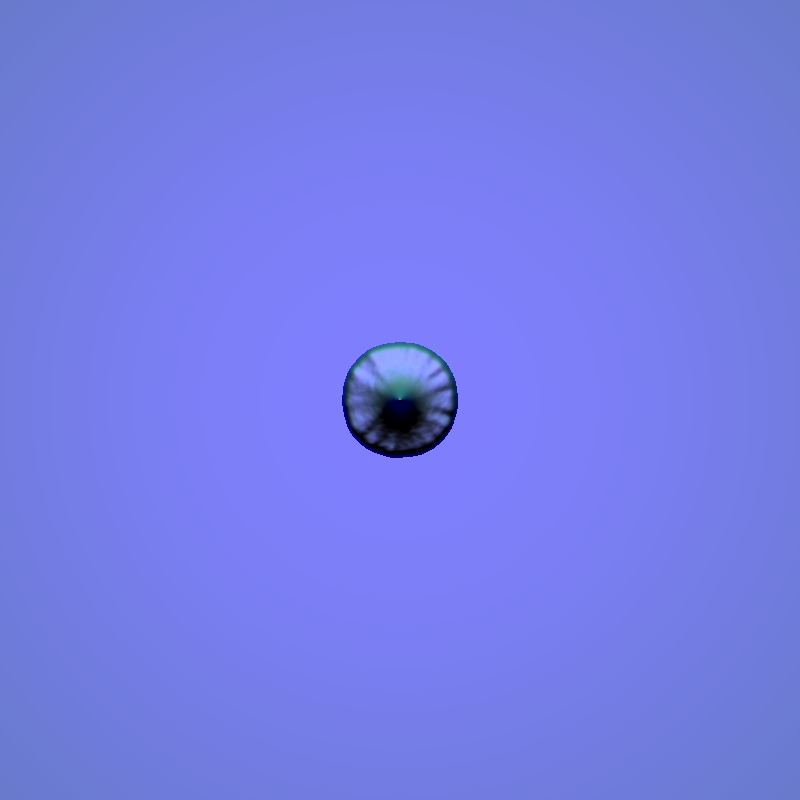}
        }
\subfigure[$\chi_0=2.5$, $t = 92.6$]{
            \label{set3ab}
            \includegraphics[scale=0.125]{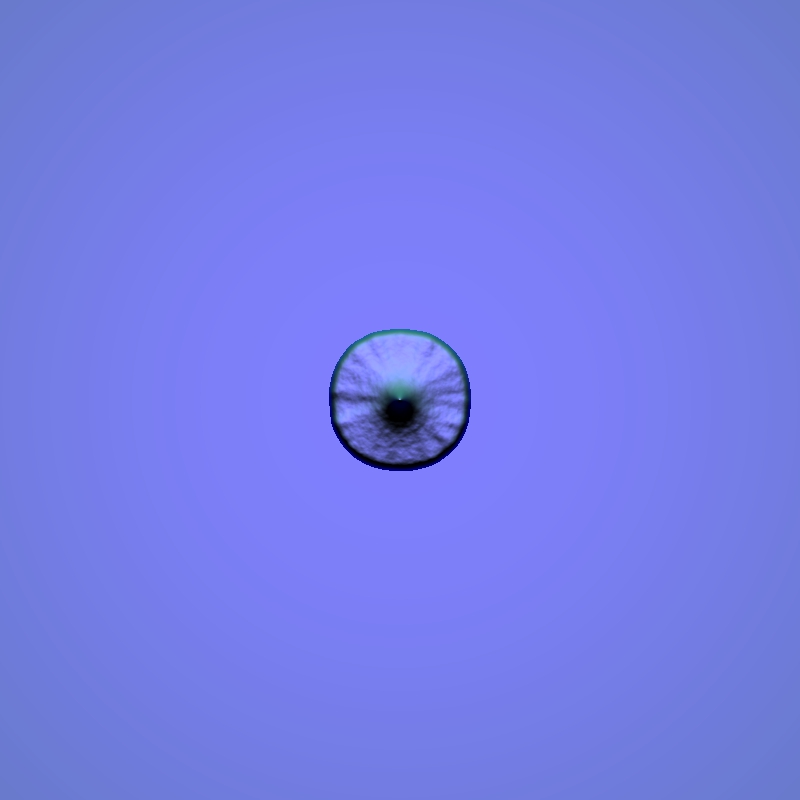}
        }\\
\subfigure[$\chi_0 = 0$, $t = 162.0$]{
            \label{set3ac}
            \includegraphics[scale=0.125]{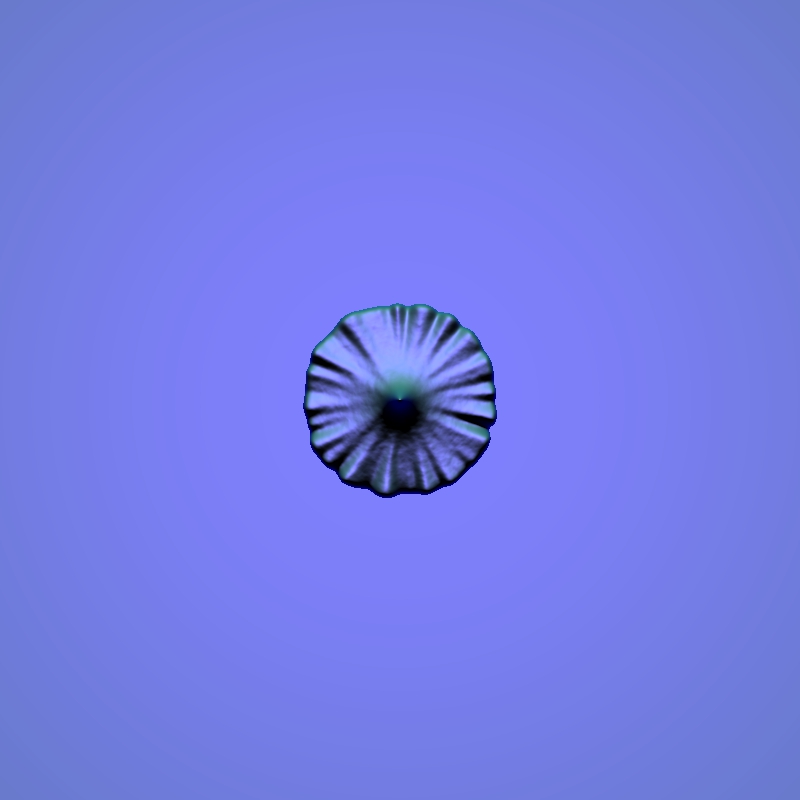}
        }
\subfigure[$\chi_0=2.5$, $t = 162.0$]{
            \label{set3ad}
            \includegraphics[scale=0.125]{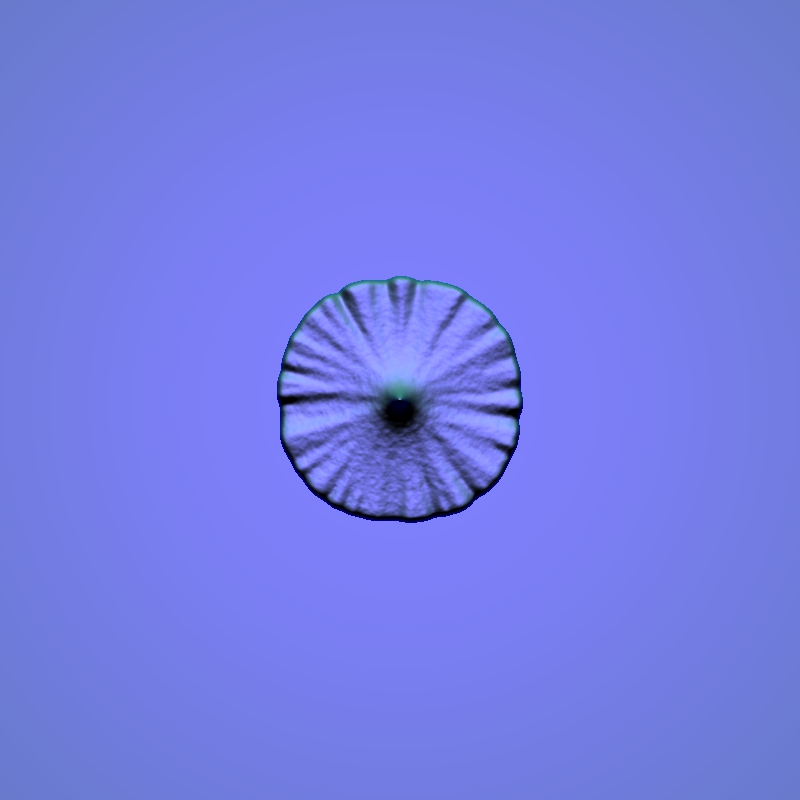}
        }\\
\subfigure[$\chi_0 = 0$, $t = 231.5$]{
            \label{set3ae}
            \includegraphics[scale=0.125]{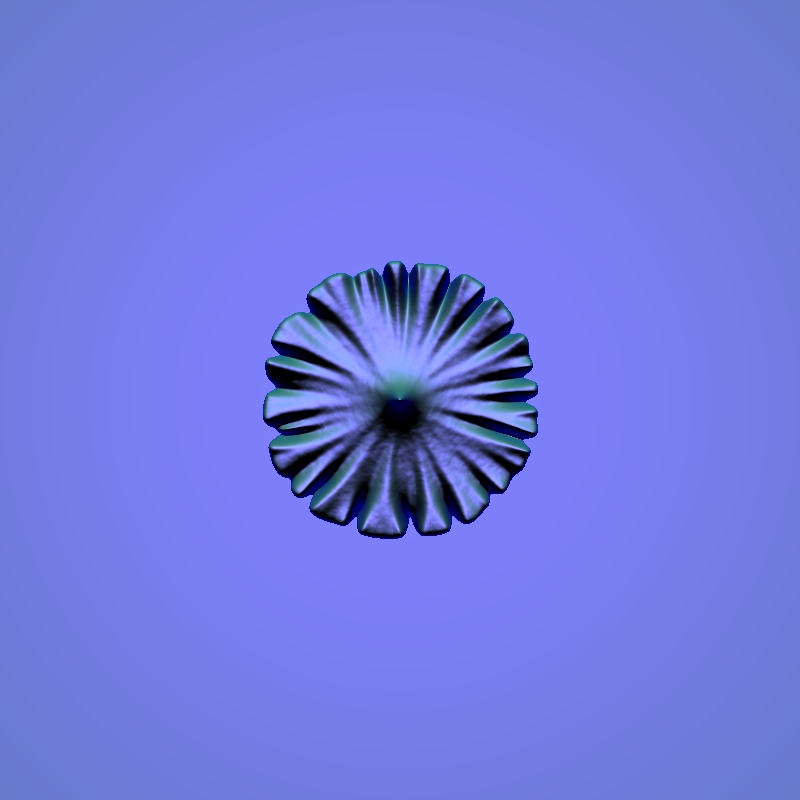}
        }
\subfigure[$\chi_0=2.5$, $t = 231.5$]{
            \label{set3af}
            \includegraphics[scale=0.125]{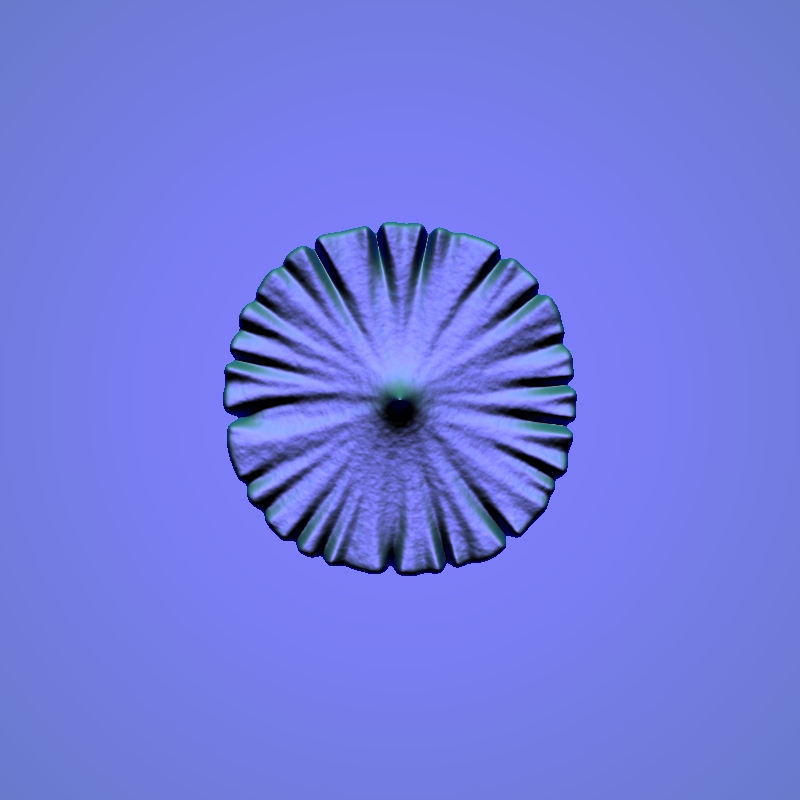}
        }\\
\subfigure[$\chi_0 = 0$, $t = 324.0$]{
            \label{set3ag}
            \includegraphics[scale=0.125]{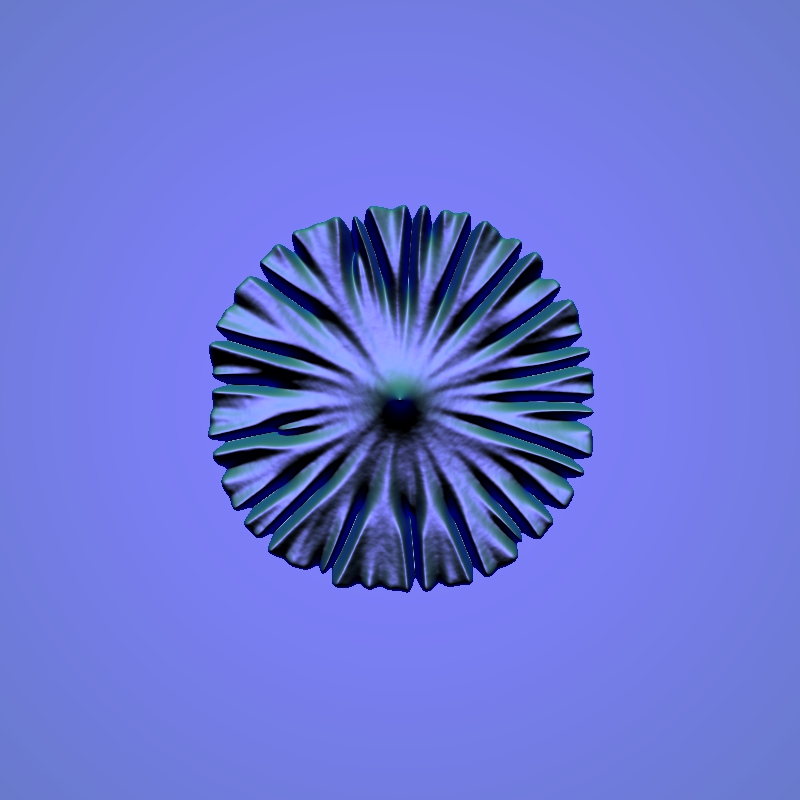}
        }
\subfigure[$\chi_0=2.5$, $t = 324.0$]{
            \label{set3ah}
            \includegraphics[scale=0.125]{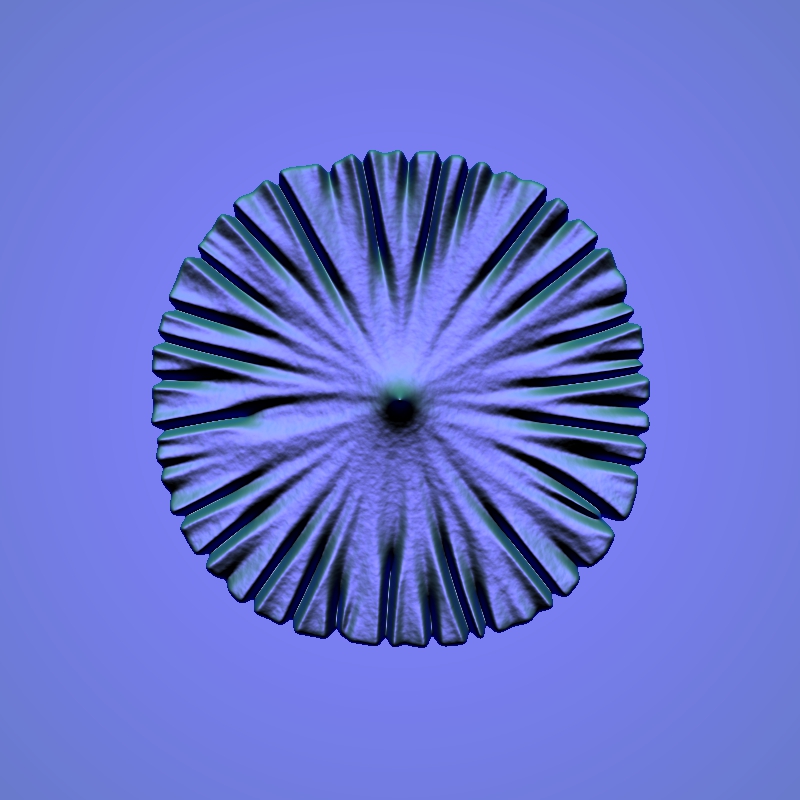}
        }\\
\end{center}
\caption{Colony growth as a result of simulations of system \eqref{ka-chem-model}, taking $\sigma_0 = 4.0, n_0 = 0.71$, with chemotaxis sensitivity $\chi_0 = 0$ (left) and $\chi_0 = 2.5$ (right), for different values of $t$.}
\label{fig:set3a}
\end{figure}

\begin{figure}[ht!]
\begin{center}
\subfigure[$\chi_0 = 5$, $t = 92.6$]{
            \label{set3ba}
            \includegraphics[scale=0.125]{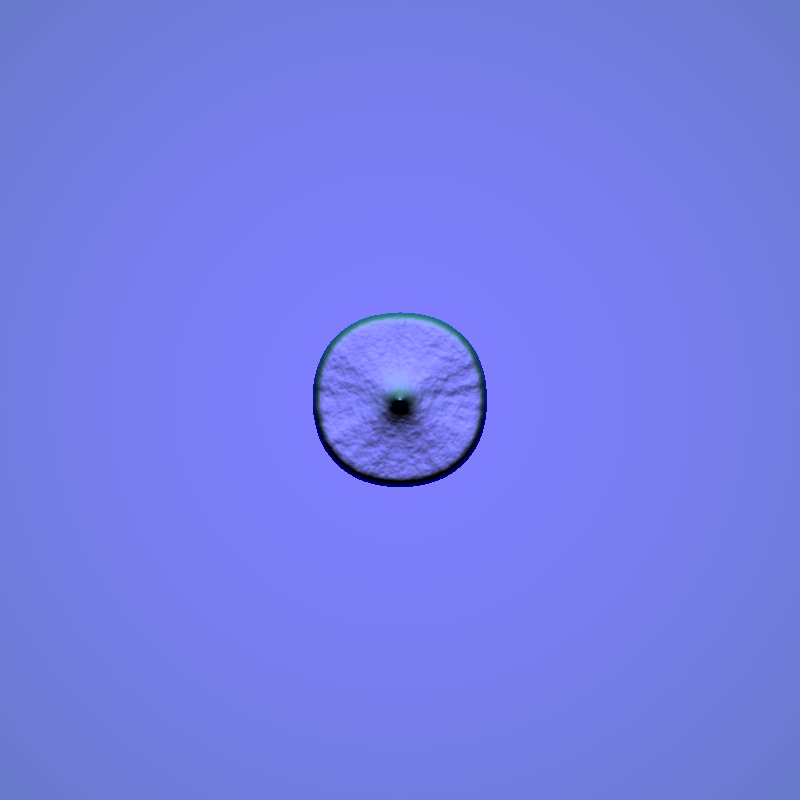}
        }
\subfigure[$\chi_0=7.5$, $t = 92.6$]{
            \label{set3bb}
            \includegraphics[scale=0.125]{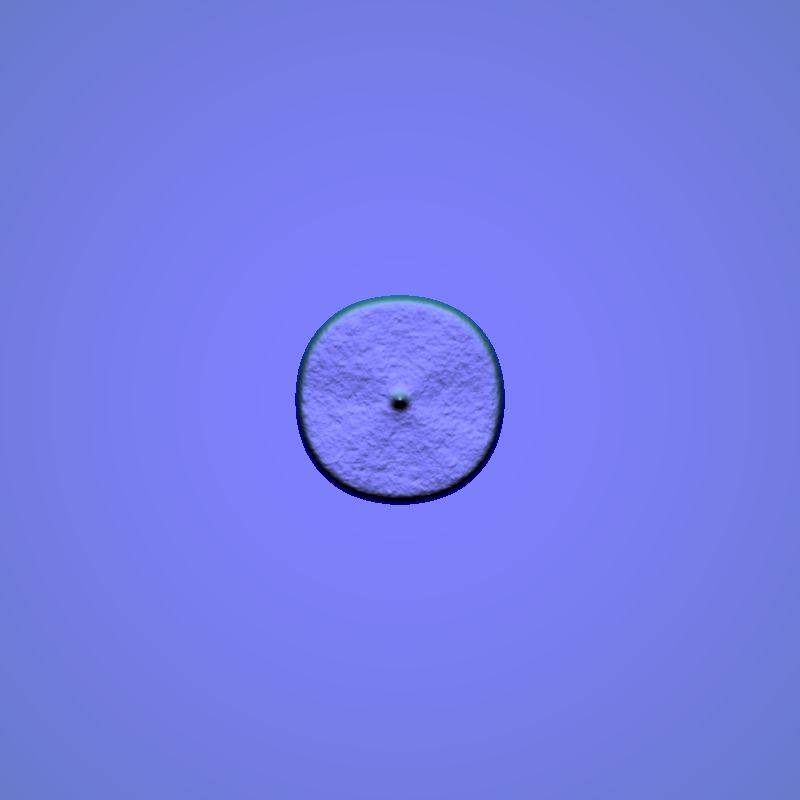}
        }\\
\subfigure[$\chi_0 = 5$, $t = 162.0$]{
            \label{set3bc}
            \includegraphics[scale=0.125]{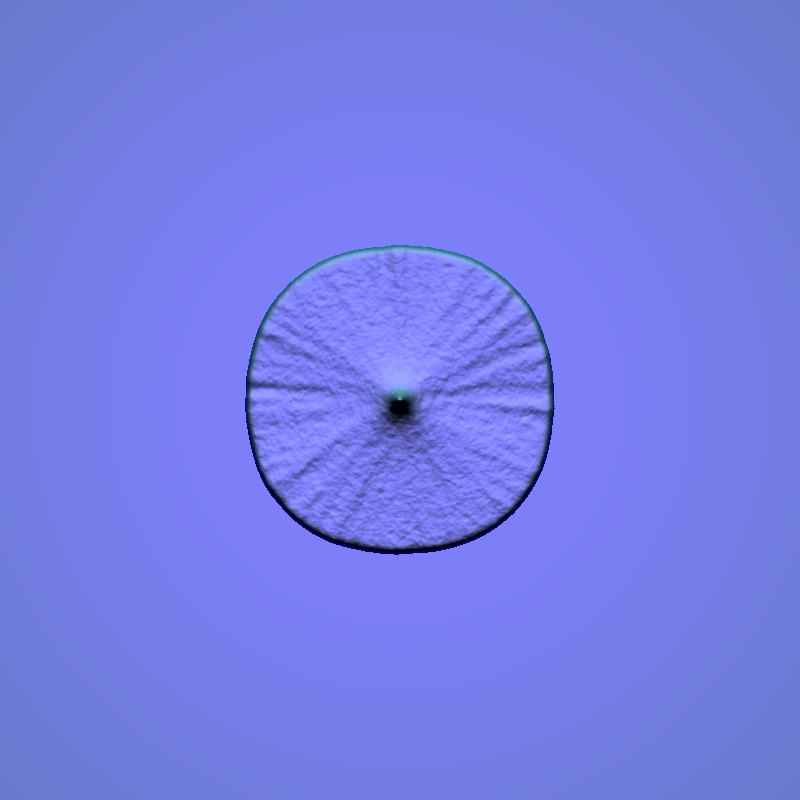}
        }
\subfigure[$\chi_0=7.5$, $t = 162.0$]{
            \label{set3bd}
            \includegraphics[scale=0.125]{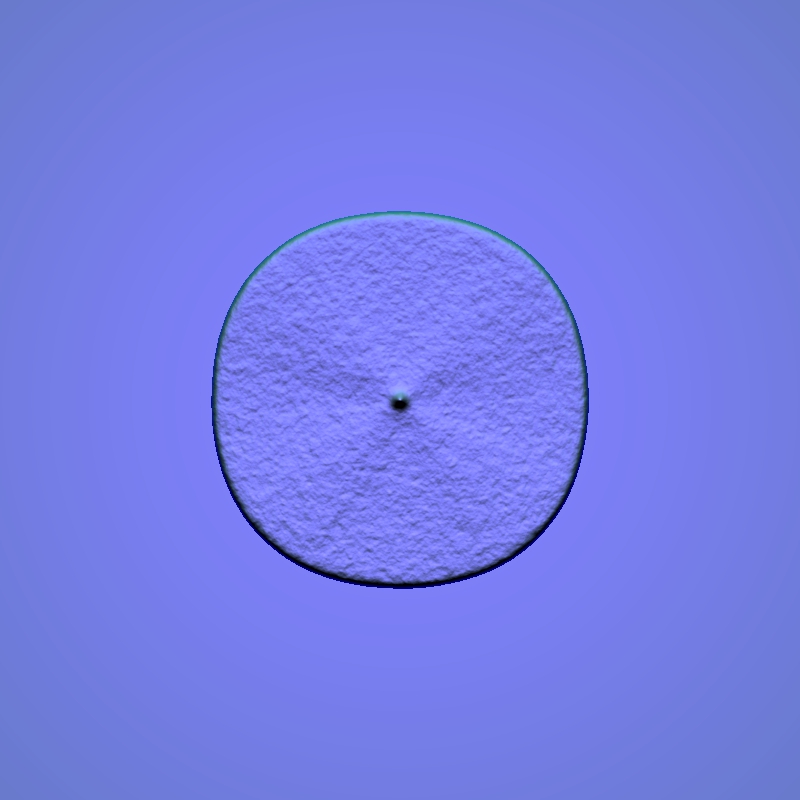}
        }\\
\subfigure[$\chi_0 = 5$, $t = 231.5$]{
            \label{set3be}
            \includegraphics[scale=0.125]{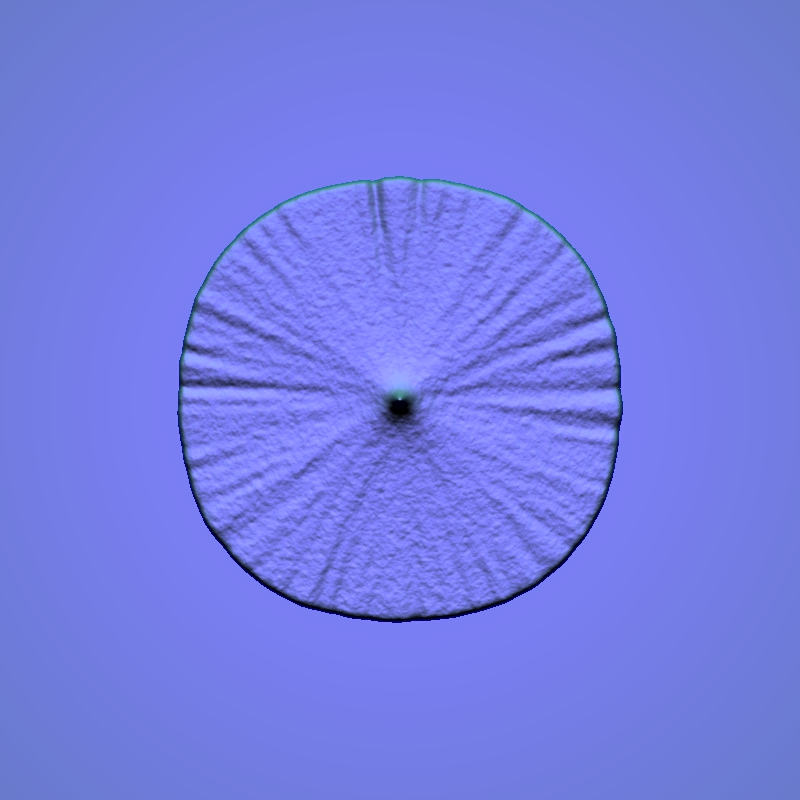}
        }
\subfigure[$\chi_0=7.5$, $t = 231.5$]{
            \label{set3bf}
            \includegraphics[scale=0.125]{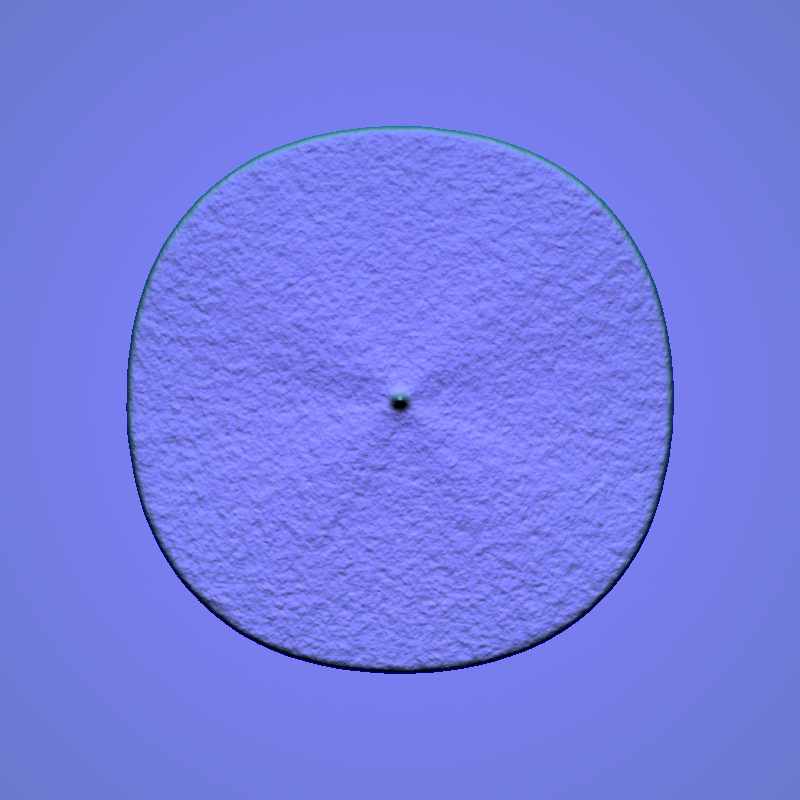}
        }\\
\subfigure[$\chi_0 = 5$, $t = 324.0$]{
            \label{set3bg}
            \includegraphics[scale=0.125]{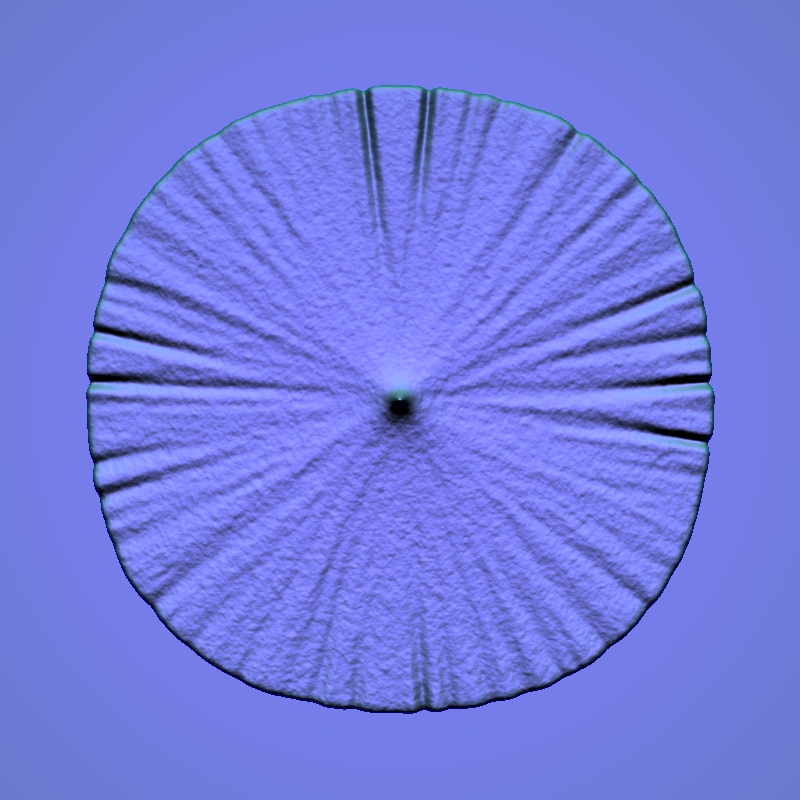}
        }
\subfigure[$\chi_0=7.5$, $t = 324.0$]{
            \label{set3bh}
            \includegraphics[scale=0.125]{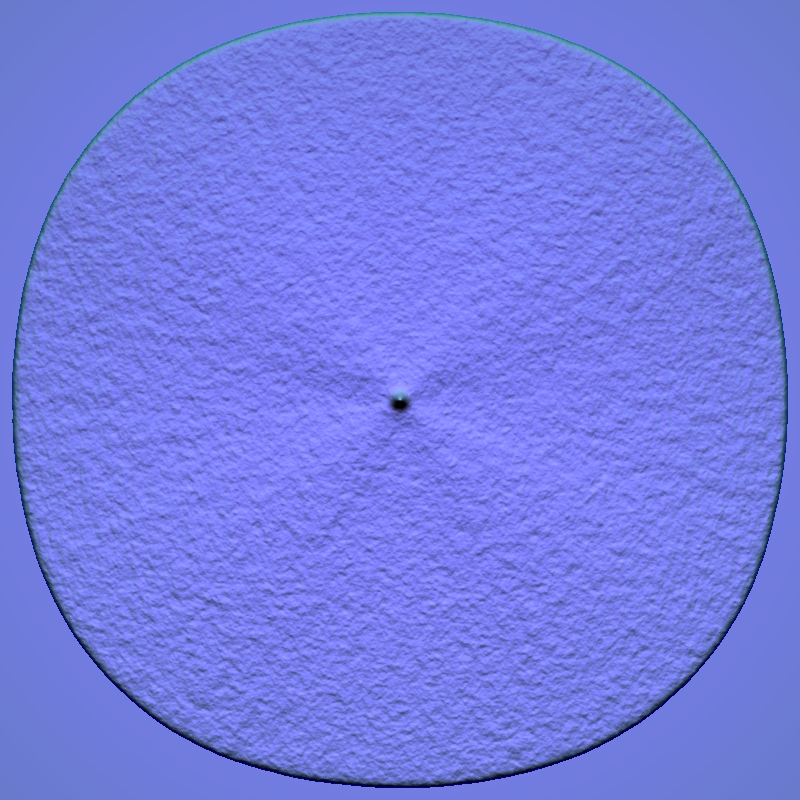}
        }\\
\end{center}
\caption{Colony growth as a result of simulations of system \eqref{ka-chem-model}, taking $\sigma_0 = 4.0, n_0 = 0.71$, with chemotaxis sensitivity $\chi_0 = 5.0$ (left) and $\chi_0 = 7.5$ (right), for different values of $t$.}
\label{fig:set3b}
\end{figure}

\begin{figure}[ht!]
\begin{center}
\subfigure[$\chi_0 = 0$, $t = 33.0$]{
            \label{set4a}
            \includegraphics[scale=0.125]{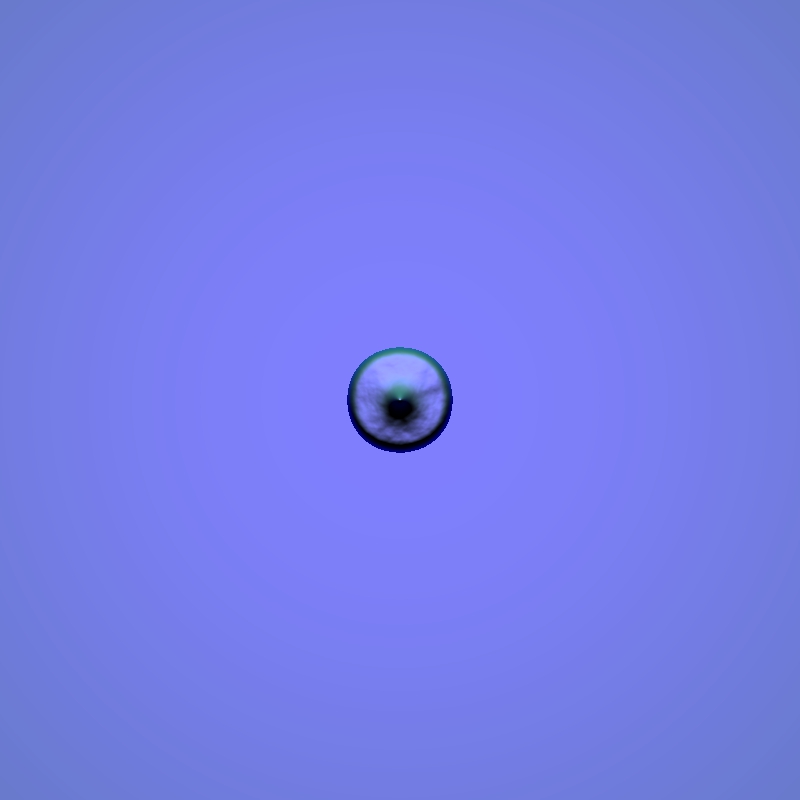}
        }
\subfigure[$\chi_0=7.5$, $t = 33.0$]{
            \label{set4b}
            \includegraphics[scale=0.125]{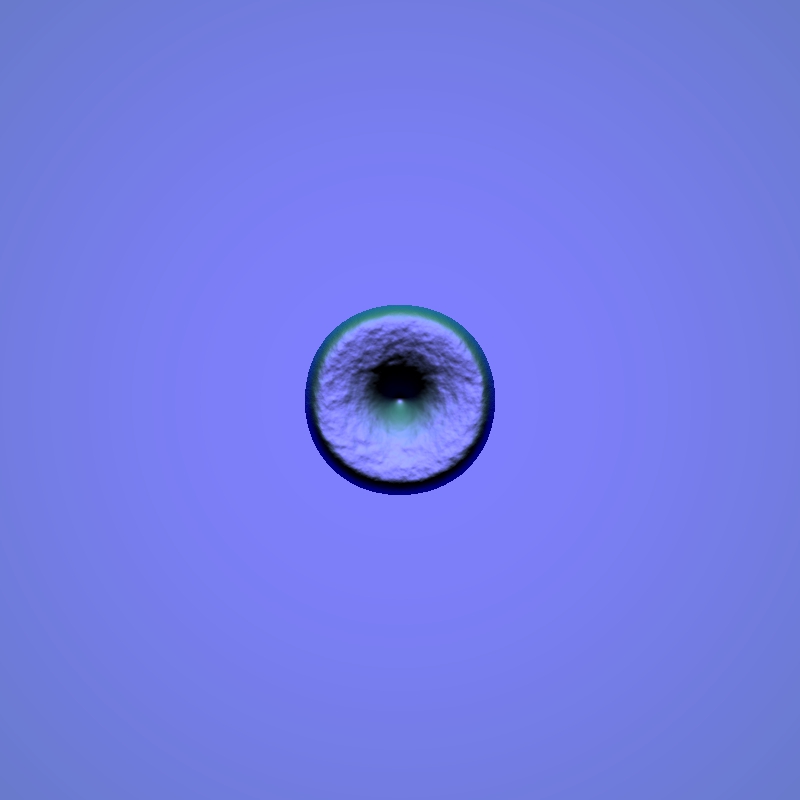}
        }\\
\subfigure[$\chi_0 = 0$, $t = 66.0$]{
            \label{set4c}
            \includegraphics[scale=0.125]{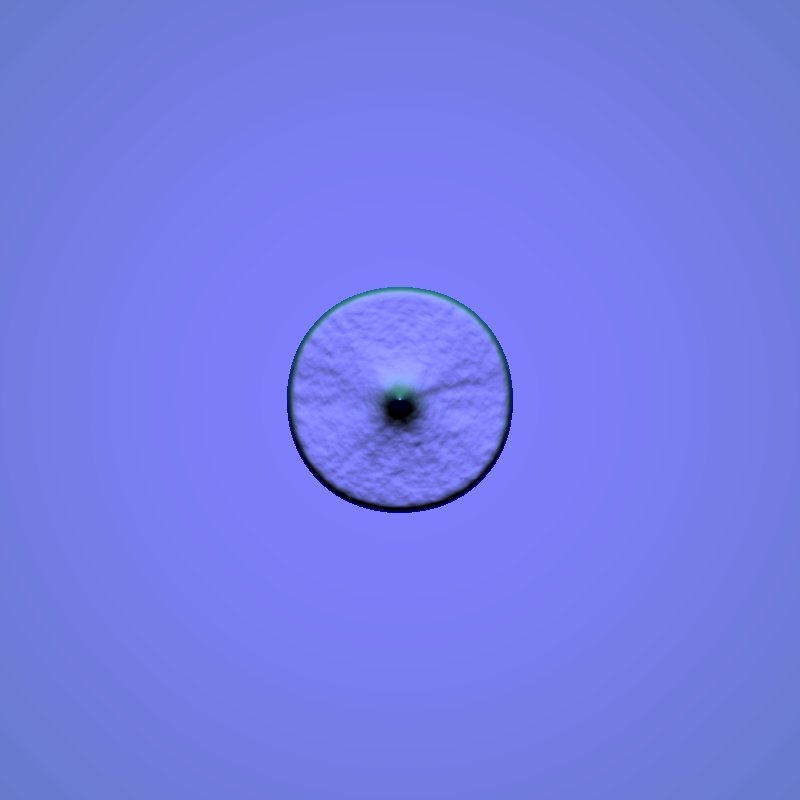}
        }
\subfigure[$\chi_0=7.5$, $t = 66.0$]{
            \label{set4d}
            \includegraphics[scale=0.125]{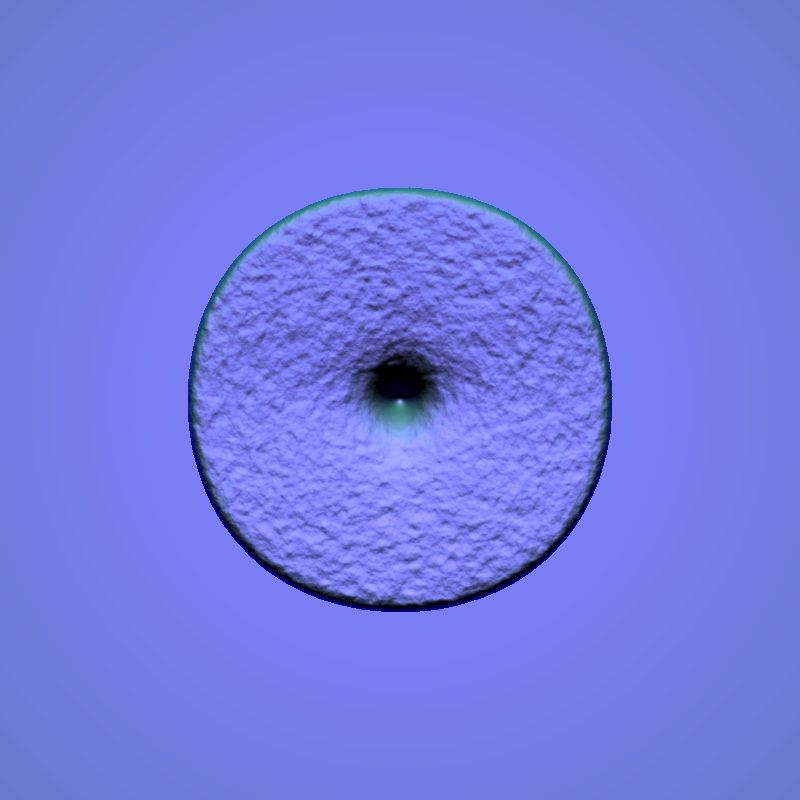}
        }\\
\subfigure[$\chi_0 = 0$, $t = 88.0$]{
            \label{set4e}
            \includegraphics[scale=0.125]{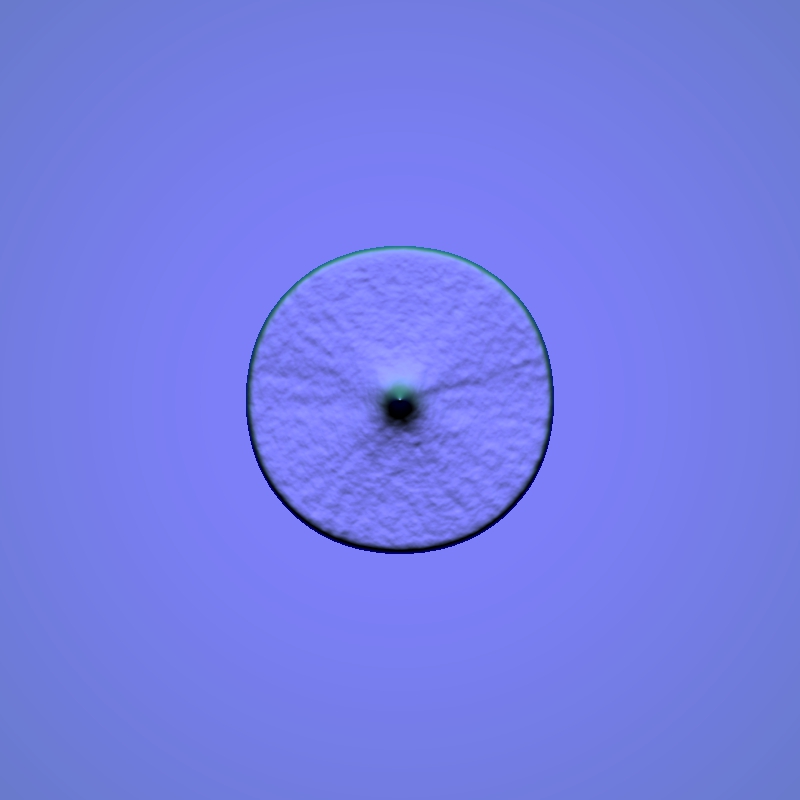}
        }
\subfigure[$\chi_0=7.5$, $t = 88.0$]{
            \label{set4f}
            \includegraphics[scale=0.125]{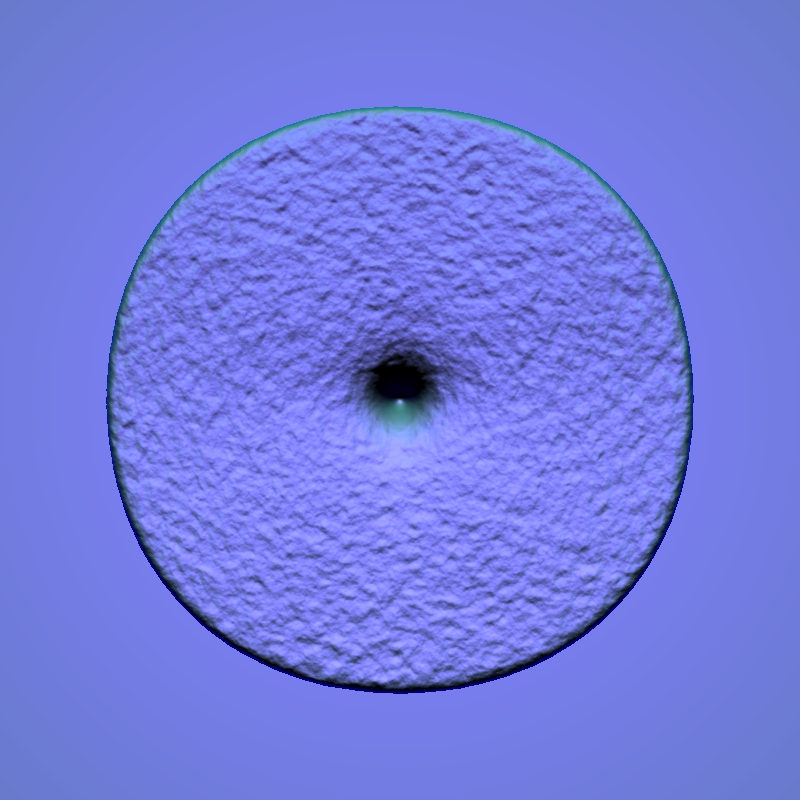}
        }\\
\subfigure[$\chi_0 = 0$, $t = 110.2$]{
            \label{set4g}
            \includegraphics[scale=0.125]{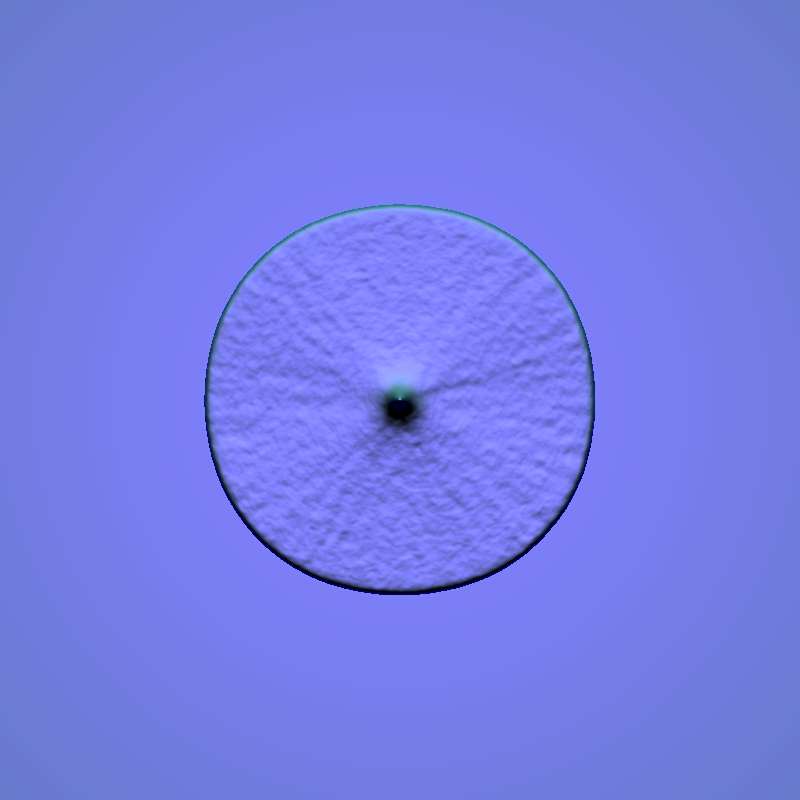}
        }
\subfigure[$\chi_0=7.5$, $t = 110.2$]{
            \label{set4h}
            \includegraphics[scale=0.125]{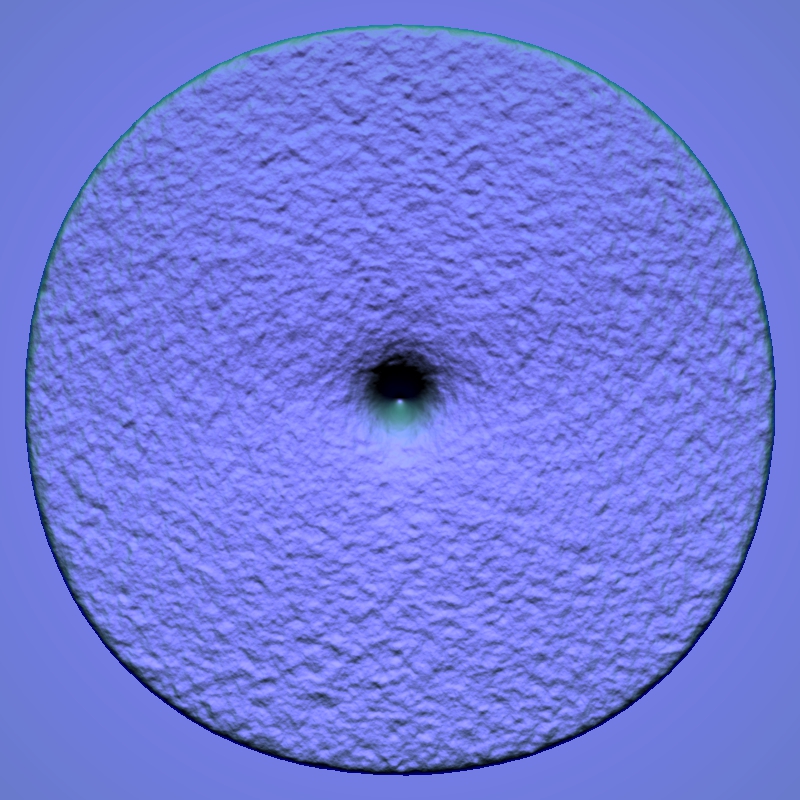}
        }\\
\end{center}
\caption{Colony growth as a result of simulations of system \eqref{ka-chem-model}, taking $\sigma_0 = 4.0, n_0 = 1.07$, with chemotaxis sensitivity $\chi_0 = 0$ (left) and $\chi_0 = 7.5$ (right), for different values of $t$.}
\label{fig:set4}
\end{figure}

\section{Conclusions}
\label{secdiscussion}

In this work we have studied the effects that nutient chemotaxis induces into bacterial population patterns modeled by a reaction-diffusion system originally proposed by Kawasaki \textit{et al.} \cite{KMMUS}. Motivated by a discussion provided by Golding \textit{et al.} \cite{GKCB}, we introduce into system \eqref{ka-model-dim} a chemotactic term compatible with the non-linear cross diffusion. After a suitable rescaling of variables, we reduce the number of free parameters of system \eqref{ka-chem-model-dim}, and we only considered $\sigma_0$, $n_0$ and $\chi_0$.  We solved numerically system \eqref{ka-chem-model}, for various values of the free parameters, including the case without chemotaxis ($\chi_0=0$). This means that we reproduced the numerical results obtained by Kawasaki \textit{et al.} in \cite{KMMUS}. Our numerical simulations of system \eqref{ka-chem-model} show that the chemotactic term provides an outward drift to the bacterial movement and therefore enhances the growth velocity of the bacterial patterns.

In order to approximate the velocity of the envelope front, under suitable assumptions, we derived a scalar equation of the bacterial density, see equation \eqref{scalar-eq-b-re}. Then, applying a result from Malaguti and Marcelli \cite{MalMar1}, see \ref{secspeedcomp}, we proved that the normal velocity $s$ is greater than the velocity of the sharp front when there is no chemotaxis ($\chi_0 =0$). Afterwards, by simulating a one-dimensional version of system \eqref{velo-system}, we calculated numerical estimations of the propagation velocity of the front.
We compared the theoretical speed predictions, defined in equations \eqref{c-bar-chi0} and \eqref{c-bar-0}, versus the estimations and we found that the propagation velocity of the envelope front is greatly increased when a chemotactic signal is present (see figure \ref{fig:Vel-1}). For the case without chemotaxis we found results comparable with those found by Kawasaki \textit{et al.} (see figure \ref{fig:Vel-xi=0}), although we must stress the fact that we are using a more accurate velocity threshold than the one used in \cite{KMMUS}. Furthermore, as figure \ref{fig:Vel-1} depicts, the numerical estimations of the propagation speed of the one-dimensional system are in good accordance with the theoretical speed thresholds that are defined for the scalar equation for $b$ (see equation \eqref{scalar-eq-b-re}), implying that the conservation-like equation \eqref{mass-conser} is a good approximation of the solutions of system \eqref{velo-system}. 

Finally, we discuss the relation of our results to the analysis of Arouh and Levine \cite{ArLe}, who studied the Kessler-Levine reaction-diffusion equations (cf. \cite{KeLe1}) and explored the addition of a nutrient chemotactic term. They found that the chemotaxis suppresses the onset of instability causing the formation of branching patterns in the colony envelope as it propagates outward. This suppression of instability is reflected in a decrease of the ``growth rate'' of the perturbation of the envelope when the chemotaxis is switched on.  The authors conjecture that this behavior can be extrapolated to other (more realistic) chemotactic models like the one introduced in this paper. It is to be pointed out, however, that our findings do not contradict the analysis of Arouh and Levine inasmuch as our numerical simulations and analytic computations exhibit an increase in the normal velocity of the envelope outward due to the chemotaxis, whereas the growth rate that Arouh and Levine refer to is the temporal eigenvalue associated to an eigenfunction (perturbation) of the linearized operator around the envelope front. Actually, a simple calculation using energy estimates (not included here) shows that in the case of our model, the spectral bounds for the eigenvalues of the linearized operator around the envelope front indeed decrease as functions of the chemotactic sensitivity. Our results and the observations of Arouh and Levine combined suggest that, although the speed of the envelope front increases with the chemotaxis (as it is shown in this paper), the growth rate of linear perturbations of the envelope is pushed to the stable (negative) region of the spectrum, delaying the formation of fingering unstable patterns (as it is shown in \cite{ArLe}). 



\appendix
\section{Bounds for the velocity}
\label{secspeedcomp}

In this section we summarize the theoretical results of Malaguti and Marcelli \cite{MalMar1} that allow us to justify why the speed of the envelope front is greater when there is a chemotactic signal. Theorem 9 in \cite{MalMar1} states that for one-dimensional reaction-diffusion equations of the form
\begin{equation}\label{scalar-mm}
u_t = (D(u) u_x)_x + g(u)
\end{equation}
with $D \in C^1 ([0,1])$, $g \in C^1 ([0,1]) $, where the diffusion degenerates at $u=0$ and $u=1$, namely, 
\begin{equation}
D(0)=D(1)=0, \qquad D(u)>0  \;\;\; \text{for all} \;\; 0 < u < 1,
\end{equation}
and the rate of growth of the population is of Fisher-KPP type,
\begin{equation}
g(0) = g(1) = 0, \qquad g(u) > 0 \;\;\;\text{for all} \;\; 0 < u < 1,
\end{equation}
then there exist a traveling wave solution (unique up to translation) to equation \eqref{scalar-mm}, provided that the velocity $c$ satisfies $0< c_* \leq c $, where $c_*$ is a threshold velocity satisfying the bound 
\begin{equation}\label{threshold}
0 < c_* \leq 2 \sqrt{\sup\limits_{s \in (0,1)} \frac{D(s) g(s)}{s}}.
\end{equation}
The case $c=c_*$ corresponds to a sharp front.
%
%
They prove the theorem by showing that the existence of such a traveling wave is equivalent to the existence of a solution $z=z(u)$ of the boundary-value problem
\begin{equation}
\begin{aligned}
\frac{dz}{du} &= -c - \frac{D(u) g(u)}{z},\\
z(0^+) &= z(1^-) = 0,
\end{aligned}
\end{equation}
for the the same $c$, satisfying $z(u)<0$ of all $u \in (0,1)$. 

The result can be easily extrapolated to a one-dimensional equation of the form
\begin{equation}
\label{theonedimeq}
b_t = (\tilde D(b,\chi_0)b_x)_x + g(b), 
\end{equation}
on $[0, n_0]$ and where $\tilde{D}(b,\chi_0)$ and $g(b)$ are defined in equations \eqref{D-tilde} and \eqref{reac-term}, respectively. The existence of a traveling front solution to the equation with $\chi_0 > 0$ is related to the solution to the boundary value problem
\begin{equation}
\label{problemstar}
\begin{aligned}
\frac{dz}{db} &= -c - \frac{\tilde D(b,\chi_0) g(b)}{z},\\
z(0^+) &= z(n_0^-) = 0.
\end{aligned}
\end{equation}
Likewise the traveling wave without chemotaxis is linked to the solutions to the problem
\begin{equation}
\label{problemzero}
\begin{aligned}
\frac{dz}{db} &= -c - \frac{\tilde D(b,0) g(b)}{z},\\
z(0^+) &= z(n_0^-) = 0.
\end{aligned} 
\end{equation}

In order to compare the speed of the envelope front in both cases, we consider the following sets
\begin{equation}
\label{defAchi0}
A_{\chi_0} = \{ c > 0 \, : \, \text{problem \eqref{problemstar} has a non-positive solution}\},
\end{equation}
\begin{equation}
\label{defA0}
A_{0} = \{ c > 0 \, : \, \text{problem \eqref{problemzero} has a non-positive solution}\},
\end{equation}
that define the range of values of $c$ for which there exists a traveling wave solution for equation \eqref{scalar-eq-b-re} (see Theorem 9 in \cite{MalMar1}). It should be noted that the infimum of the set, in either case, corresponds to the threshold speed $c_*$. The following proposition compare the two velocity thresholds (with and without chemotaxis).

\begin{prop}\label{propbound}
For any fixed value of $n_0 > 0$ and for all values of $\chi_0 \geq 0$ there hold,
\begin{itemize}
 \item[(i)] $\bar c(\chi_0) \geq \bar c(0)$, and
 \item[(ii)] $c_*(\chi_0) \geq c_*(0)$.
\end{itemize}
\end{prop}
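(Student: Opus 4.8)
The plan is to treat the two claims by genuinely different arguments, since (i) is a direct pointwise comparison of the explicit upper bounds, while (ii) compares the true thresholds $c_*(\chi_0)=\inf A_{\chi_0}$ and $c_*(0)=\inf A_0$ and so needs the phase-plane characterization \eqref{problemstar}--\eqref{problemzero} recalled above; the inequality $c_*\le\bar c$ does not let one deduce (ii) from (i).

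For part (i) I would first record the factorization
\[
\tilde D(b,\chi_0)=\tilde D(b,0)\,(1+\chi_0 b),
\]
which is immediate from \eqref{D-tilde}. Since $b\in[0,n_0]$ and $\chi_0\ge 0$ force $1+\chi_0 b\ge 1$, this gives the pointwise bound $\tilde D(b,\chi_0)\ge \tilde D(b,0)\ge 0$, hence $\tilde D(b,\chi_0)g(b)/b\ge \tilde D(b,0)g(b)/b$ for every $b\in(0,n_0]$ because $g\ge 0$. Equivalently, the function appearing under the radical satisfies $\tilde\psi_{\chi_0}(b)=\tilde\psi_0(b)(1+\chi_0 b)\ge \tilde\psi_0(b)$. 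Taking suprema (or maxima over $[0,n_0]$) preserves the inequality, and since $\bar c$ is $2n_0\sqrt{\sigma_0}$ times the square root of that maximum, the monotonicity of the square root yields $\bar c(\chi_0)\ge\bar c(0)$ upon comparing \eqref{c-bar-chi0} with \eqref{c-bar-0}.

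For part (ii) I would exploit that it suffices to prove the set inclusion $A_{\chi_0}\subseteq A_0$, for then $c_*(\chi_0)=\inf A_{\chi_0}\ge\inf A_0=c_*(0)$. Fix any $c\in A_{\chi_0}$ and let $z_\chi<0$ on $(0,n_0)$ be the corresponding non-positive solution of \eqref{problemstar}. The key is to insert $z_\chi$ into the no-chemotaxis equation of \eqref{problemzero}: using $z_\chi'=-c-\tilde D(b,\chi_0)g(b)/z_\chi$ together with the factorization from part (i), a one-line computation gives
\[
z_\chi'+c+\frac{\tilde D(b,0)g(b)}{z_\chi}
= -\,\frac{\chi_0\,b\,\tilde D(b,0)\,g(b)}{z_\chi}\;\ge\;0,
\]
since $\tilde D(b,0),g(b)\ge 0$ and $z_\chi<0$. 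Thus $z_\chi$ is a \emph{supersolution} of the no-chemotaxis equation $z'=-c-\tilde D(b,0)g(b)/z$. Because the two diffusivities agree to leading order as $b\to0^+$ (both behave like $n_0 b$, giving $z\sim -cb$), the genuine solution $z_0$ of \eqref{problemzero} issuing from $z(0^+)=0$ starts with the same asymptotics, and a comparison argument for this scalar first-order ODE traps it below the supersolution, $z_0(b)\le z_\chi(b)<0$ on $(0,n_0)$. Hence $z_0$ stays non-positive and furnishes the connection required for $c\in A_0$, establishing $A_{\chi_0}\subseteq A_0$.

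The main obstacle will be the two degenerate endpoints, where $\tilde D$ and $g$ vanish and the $1/z$ term is singular, so the comparison principle cannot be applied naively. Near $b=0$ the shared leading asymptotics $z\sim -cb$ must be used to initialize the comparison on an interval $(0,\delta)$ rather than exactly at the degenerate point. Near $b=n_0$ one must check that the trapped solution $z_0$ actually returns to $0$ (so that the connection closes) instead of detaching; here the quadratic vanishing $z\sim-\gamma(n_0-b)^2$ dictated by the ODE, together with the upper barrier $z_0\le z_\chi\to 0$, is what I would use to squeeze $z_0$ back to the endpoint. These endpoint estimates, rather than the interior differential inequality (which is immediate from $\tilde D(b,\chi_0)=\tilde D(b,0)(1+\chi_0 b)$), are where the real care is needed.
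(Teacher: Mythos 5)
Your part (i) is exactly the paper's argument: the pointwise inequality $\tilde\psi(b,\chi_0)=\tilde\psi(b,0)(1+\chi_0 b)\ge\tilde\psi(b,0)$ coming from $1+\chi_0 b\ge 1$ on $[0,n_0]$, followed by passing to the maximum and the square root. For part (ii) you also take the paper's route: reduce to the inclusion $A_{\chi_0}\subseteq A_0$ and observe that a non-positive solution $z_\chi$ of \eqref{problemstar} satisfies
\[
\frac{dz_\chi}{db}\;\ge\;-c-\frac{\tilde D(b,0)\,g(b)}{z_\chi},
\]
which is precisely the differential inequality displayed in the paper's proof (the paper's display omits the factor $g(b)$, evidently a typo). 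Up to this point the two arguments coincide.

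The divergence, and the genuine gap, is in how the inclusion is closed. The paper invokes Lemma 8 of Malaguti and Marcelli \cite{MalMar1}, which states exactly that the existence of such a negative upper solution vanishing at the endpoints forces the existence of an actual negative solution of \eqref{problemzero} for the same $c$; all of the delicate endpoint analysis lives inside that lemma. You instead attempt to reprove this by a forward comparison argument, and your closing step at $b=n_0$ does not work as stated: the comparison yields $z_0(b)\le z_\chi(b)<0$, so $z_\chi$ is an \emph{upper} barrier, and $z_\chi(n_0^-)=0$ only gives $\limsup_{b\to n_0^-} z_0(b)\le 0$, which is no information at all --- $z_0$ could tend to a strictly negative limit, or to $-\infty$, while remaining below $z_\chi$. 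Forcing $z_0(n_0^-)=0$ requires a \emph{lower} barrier near $n_0$ (or the backward-shooting structure that Malaguti and Marcelli actually use), which your sketch does not provide. There is also a smaller soft spot at $b=0^+$: since both solutions share the leading asymptotics $z\sim -cb$, initializing the comparison on $(0,\delta)$ needs a second-order expansion, not just the leading term. Neither issue undermines the strategy --- it is the paper's strategy --- but as written your substitute for Lemma 8 does not close; either cite that lemma, as the paper does, or supply the missing lower barrier at the right endpoint.
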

\begin{proof}
The proof of (i) is straightforward: for all values of $\chi_0 \geq 0$ and for each $b \in (0,n_0)$ there holds
\[
 \psi(b,\chi_0) = b(1-b/n_0)^2(1+\chi_0 b) \geq \psi(b,0) = b(1-b/n_0)^2.
\]
Therefore $\max_{b\in[0,n_0]} \psi(b,\chi_0) \geq \max_{b\in[0,n_0]} \psi(b,0)$, yielding (i).

To show (ii), let us suppose that $c > 0$ is a velocity such that $c \in A_{\chi_0}$ for some $\chi_0 > 0$. This means that the problem \eqref{problemstar} has a non-positive solution $\zeta(b)$. Therefore, since $\tilde D(b,\chi_0) > \tilde D(b,0)$ for all $b \in (0,n_0)$ inasmuch as $\chi_0 > 0$, we have that
\[
 \frac{d\zeta}{db} = -c - \frac{\tilde D(b,\chi_0)}{\zeta(b)} > -c - \frac{\tilde D(b,0)}{\zeta(b)}.
\]
By Lemma 8 in \cite{MalMar1}, there exists a negative solution $z = z(b)$ to problem \eqref{problemzero}, with the same constant $c$. This shows that $c \in A_0$. Therefore, $A_{\chi_0} \subseteq A_0$ for all $\chi_0 \geq 0$, and consequently $c_*(\chi_0) = \inf A_{\chi_0} \geq c_*(0) = \inf A_0$. This yields (ii).
\end{proof}

\section*{Acknowledgements}

The work of J. F. Leyva was partially supported by CONACyT (Mexico) through a scholarship for doctoral studies, grant no. 220865.

%


\end{document}